\newcommand{\espa}{\makebox[1cm]{}}
\newcommand{\esp}{\makebox[.5cm]{}}
\newcommand{\ssi}{\makebox[1cm]{iff}}
\newcommand{\nl}{\vspace{1mm} \\}
\newcommand{\n}{\mathbb N}
\newcommand{\reel}{\mathbb R}
\newcommand{\words}{\{0,1\}^\ast}
\newcommand{\nwords}{\n^{\ast}}
\newcommand{\cantor}{\{0,1\}^{\n}}
\newcommand{\baire}{\n^{\n}}
\begin{document}

\setcounter{page}{345}
\publyear{2021}
\papernumber{2077}
\volume{182}
\issue{4}

  \finalVersionForARXIV

\title{The Inverse of Ackermann Function is Computable in Linear Time}

\author{Claude Sureson\thanks{Address  for correspondence: Universit\'e Paris 7 Denis Diderot,
                          France.\newline \newline
          \vspace*{-6mm}{\scriptsize{Received  September 2021; \ revised September 2021.}}}
\\
Universit\'e Paris 7 Denis Diderot\\
5 Rue Thomas Mann, 75013 Paris, France\\
sureson@math.univ-paris-diderot.fr
}

\maketitle

\runninghead{C. Sureson}{The Inverse of Ackermann Function is Computable in Linear Time}

\begin{abstract}
We propose a detailed proof of the fact that the inverse of Ackermann function is computable in linear time.
\end{abstract}

\begin{keywords}
 Recursive functions, Complexity of computation.
\end{keywords}

\section{Introduction}

The Ackermann function was proposed in 1926 by W.\,Ackermann~(see \cite{ack}) as a simple example of a total recursive function which is not primitive recursive. It is often presented, as done initially by R.\,P\'eter, under the form of  a two argument function  \,$A:\n\times\n\to\n$.

The function \,$n\mapsto A(n,n)$\, grows extremely fast (asymptotically faster than any primitive recursive function). Hence its inverse, denoted \,$\alpha$, grows very slowly; it is known to be primitive recursive. The function \,$\alpha$\, appears to express time complexities in data structure analysis as in the work of E.\,Tarjan~\cite{tar} and in algorithmic geometry as in the work of B.\,Chazelles~\cite{cha}. It is also used by G.\,Nivasch, R.\,Seidel and M.\,Sharir without reference to the original Ackermann function $A$ in~\cite{niva,seidel2,seidel}.

In a previous work~\cite{sur}, we needed a bound on the amount of time spent to compute the function \,$\alpha$. But except for the fact that \,$\alpha$\, is primitive recursive, we could not find a documented reference. This is why we proposed a detailed proof of the fact that \,$\alpha$\, is computable in linear time (on a multitape Turing machine). Once our work was made public, L.\,Tran, A.\,Mohan and A.\,Hobor~\cite{tmh} informed us that they had obtained  a similar result by totally different methods (functional programming techniques). Our demonstration is elementary and builds partly on the exposition by G.\,Tourlakis~\cite{tou} of the primitive recursiveness of the graph of $A$.

\section{A few classical definitions}

\subsection{Some notation}

$\n,\,\mathbb Z$ and $\reel$ represent respectively the set of natural, integer  and real numbers.
$\words$ and $\cantor$ denote the sets of finite and infinite binary sequences.
$\nwords$ and $\baire$ are the sets of finite and infinite sequences of natural numbers.

\begin{definition}~\label{bidon}
\begin{enumerate}
\item Let \,$\mathbf x$\, be a finite or infinite sequence. For an integer \,$i\in\n,\ \,\mathbf x\restriction_ i$ is the restriction of \,$\mathbf x$\, onto the set \,$\{0,1,\ldots,i-1\}$.
\item If \,$\mathbf x$\, is a finite sequence, then \,$|\mathbf x|$ \,denotes its length.
\item $<_{lex}$ \ is the lexicographic order on \,$\nwords$.
\item Let $n\in\n$. Then \,$|n|$\, is the length of the string $\sigma_n$ corresponding to $n$ under binary representation.
For $n\geq 1$,  \,$|n|\,=\,\lfloor\log_2 n\rfloor+1$ \,($|0|=1$) and \ $n<2^{|n|}\leq 2n$.
\item Let \ $log\,:\n\setminus\{0\} \to\n$ \ be defined, for $n\geq 1$ by \ $log(n)=\lceil\log_2(n)\rceil$. \ Then \ $|n|-1\leq log(n)\leq|n|$.
\end{enumerate}
\end{definition}

All complexity notions refer to binary representation of integers. Given a function $f:\n\to\n$ which is time constructible (see~\cite{arora}) and such that \,$f(n)\geq n$ \,for all \,$n\in\n$, we shall consider predicates checkable in time \ $\+O(f(n))$ and functions computable in time \,$\+O(f(n))$.

\subsection{Definition of the Ackermann function}

There exist different versions of Ackermann function depending on the initial definitions (\emph{i.e.} the values of \,$A(0,n)$\, and of \,$A(k,0)$, \,for \,$k,n\in\n$). We refer to the definition in~\cite{co-la} and freely use the properties proved in this textbook.\medskip

\begin{definition}[{\cite[5.2.1]{co-la}}]\label{def-A}
\begin{enumerate}[label=(\alph*),topsep=-1mm,itemsep=-.2ex]
\item
Let \,$A:\n\times\n\to\n$\, be defined as follows: for \,$k,n\in\n$,
\begin{itemize}[topsep=-1mm,itemsep=-.2ex]
\item $A(0,n)=2^n$,
\item $A(k,0)=1$,
\item $A(k+1,n+1)=A(k,A(k+1,n))$.
\end{itemize}
\item For \,$k\in\n$, \,let \,$A_k:\n\to\n$\, be such that for all \,$n\in\n$, \,$A_k(n)=A(k,n)$.
\item Let \,$Ack:\n\to\n$\, be such that \,$Ack(n)=A(n,n)$.
\end{enumerate}
\end{definition}

We chose this version rather than Tourlakis' one because it allows some simplifications and because it is closely related to the version \,$A^T$\, proposed by Tarjan~\cite{tar} and refered to in~\cite{cha}. $A^T$ is defined as follows:
\begin{itemize}[topsep=-1mm,itemsep=-.2ex]
\item for $n\in\n,\ \,A^T(0,n)=2n$,
\item for $k\in\n,\ \,A^T(k,0)=0$ \,and\, $A^T(k,1)=2$,
\item for $k\in\n,\,n\geq 1, \,A^T(k+1,n+1)=A^T(k,A^T(k+1,n))$.
\end{itemize}

One can check that for any \,$k\in\n,\,n\geq 1,\ \,A^T(k+1,n)=A(k,n)$.\nl
We recall the notion of inverse. The methods developed in this paper can be applied to inverse functions with two parameters (see~\cite{cha,tar}), but we shall not consider them here. We should also mention the work of~\cite{seidel,seidel2} using ``inverse Ackermann functions" without refering explicitely to the Ackermann function itself.

\begin{definition}~\label{def-ack}
\begin{itemize}[topsep=-1mm,itemsep=-.2ex]
\item Let \,$f:\n\to\n$\, be unbounded and nondecreasing. The  inverse of $f$ denoted \,$Inv_f$\, is defined as follows: for any \,$n\in\n,\ \,Inv_f(n)$ \,is the least $k\in\n$ such that \,$f(k)\geq n$.
\item Let \,$\alpha:\n\to\n$ \,be\, $Inv_{Ack}$.
\end{itemize}
\end{definition}

Because of the above relation between $A$ and $A^T$,
results about $A,\ Ack$ and $\alpha$ can thus be applied to $A^T$ and its related ``inverses".\nl
We recall some basic properties of the functions \,$A_k$, \,for $k\in\n$:

\begin{lemma} \label{property_Ak} For any \,$k\in\n$,
\begin{enumerate}[label=(\alph*),topsep=-1mm,itemsep=-.2ex]
\item $A_k$\, is strictly increasing (see \cite[lemma 5.7]{co-la}),\vspace*{-0.5mm}
\item for any \,$n\geq 1$, \,$A_k(n)\leq A_{k+1}(n)$ (see \cite[lemma 5.8]{co-la}),\vspace*{-0.5mm}
\end{enumerate}
\end{lemma}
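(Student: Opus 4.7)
The plan is to prove both parts by induction on $k$, after first establishing a useful auxiliary lower bound on $A_k$ that is not part of the stated lemma but underlies both properties.

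First I would prove the auxiliary inequality $A_k(m)\geq m+1$ for every $k,m\in\n$, by induction on $k$. The base $k=0$ is the elementary fact that $2^m\geq m+1$. For the step, fixing $k$ and assuming $A_k(p)\geq p+1$ for all $p$, I would run an inner induction on $m$: the case $m=0$ is $A_{k+1}(0)=1$, and for the successor step I would expand $A_{k+1}(m+1)=A_k(A_{k+1}(m))$ and apply the outer hypothesis to get $A_k(A_{k+1}(m))\geq A_{k+1}(m)+1$, then the inner hypothesis to conclude $A_{k+1}(m+1)\geq m+2$.

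For part (a), I would again induct on $k$. The base $k=0$ is immediate since $A_0(n)=2^n$. For the inductive step, assuming $A_k$ is strictly increasing (in fact I only need $A_k(m)>m$, which is the auxiliary bound), the identity $A_{k+1}(n+1)=A_k(A_{k+1}(n))$ together with $A_k(A_{k+1}(n))>A_{k+1}(n)$ yields strict monotonicity of $A_{k+1}$. Part (b) then follows by induction on $n\geq 1$: the base $n=1$ reduces to the equality $A_{k+1}(1)=A_k(A_{k+1}(0))=A_k(1)$, and for the step I would write $A_{k+1}(n+1)=A_k(A_{k+1}(n))$ and use (a), now available for $A_k$, together with the bound $A_{k+1}(n)\geq n+1$ from the auxiliary inequality, to conclude $A_k(A_{k+1}(n))\geq A_k(n+1)$.

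The whole argument is essentially routine; the only real obstacle is noticing that the stated lemma is not self-sustaining under induction. Without the auxiliary bound $A_k(m)\geq m+1$, neither the inductive step of (a) nor that of (b) closes, because both require that $A_{k+1}(n)$ is large enough to be a nontrivial input to $A_k$. Once that bound is isolated and proved first (by the nested induction above), both parts fall out with one-line inductive steps.
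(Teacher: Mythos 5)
Your argument is correct: the auxiliary bound $A_k(m)\geq m+1$ is exactly the right strengthening needed to close both inductive steps, and the nested induction establishing it is sound. The paper itself gives no proof of this lemma (it simply cites Cori--Lascar, Lemmas 5.7 and 5.8), and your argument is the standard textbook one, so there is nothing to compare beyond noting that you have supplied the proof the paper delegates to its reference.
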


We shall consider iterates of a function:

\begin{definition} Given a function \,$g:\n\to\n$\, and $\,m\in\n$, \,the $m$th iterate of $g$, denoted $g^{(m)}$ is defined inductively by: \,$g^{(0)}(n)=n$\, and \,$g^{(m+1)}(n)=g(g^{(m)}(n))$.
\end{definition}

To simplify notation (avoiding  towers of exponentials), we shall apply the notion to the following function: \medskip

\begin{definition} Let $exp:\n\to\n$ be such that $exp(n)=2^n$, \,for \,$n\in\n$.
\end{definition}\vspace*{-1mm}

\section{Properties of the functions \,$\pmb {A_k}$\, and of their inverses}

We first note some elementary properties of $A$:

\begin{fact}\label{fact p3} For any \,$i\in\n$,
\begin{enumerate}[label=(\arabic*),topsep=-1mm,itemsep=-.2ex]
\item $A(i,1)=2$,
\item $A(i,2)=4$,
\item $A(1,i)=exp^{(i)}(1)$.
\end{enumerate}
\end{fact}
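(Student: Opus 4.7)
The plan is to prove each of the three identities by straightforward induction on $i$, unfolding the recursion from Definition~\ref{def-A}.

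For (1), the base case is $A(0,1) = 2^1 = 2$, directly from the clause $A(0,n)=2^n$. For the inductive step, I apply the recurrence $A(k+1,n+1)=A(k,A(k+1,n))$ at $n=0$ and then the boundary condition $A(k,0)=1$:
\[
A(i+1,1) \;=\; A(i,\,A(i+1,0)) \;=\; A(i,1),
\]
so the induction hypothesis yields $A(i+1,1)=2$.

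For (2), the base case is $A(0,2) = 2^2 = 4$. The inductive step unfolds in the same way, but now uses part (1) at the index $i+1$ to replace the inner value:
\[
A(i+1,2) \;=\; A(i,\,A(i+1,1)) \;=\; A(i,2),
\]
and the induction hypothesis gives $A(i+1,2)=4$.

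For (3), the base case is $A(1,0) = 1 = exp^{(0)}(1)$, using $A(k,0)=1$ and the convention $g^{(0)}(n)=n$. The inductive step chains the recurrence with the definition of $A_0$:
\[
A(1,i+1) \;=\; A(0,\,A(1,i)) \;=\; 2^{A(1,i)} \;=\; 2^{exp^{(i)}(1)} \;=\; exp^{(i+1)}(1),
\]
by the induction hypothesis and the definition of iteration. No step presents a genuine obstacle; each identity is a single unfolding of the defining equations. The only bookkeeping to respect is that (2) depends on (1), so (1) should be established first, while (3) can be proved independently.
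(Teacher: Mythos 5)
Your proof is correct and follows essentially the same route as the paper's: each of the three identities is established by induction on $i$, unfolding the recurrence $A(k+1,n+1)=A(k,A(k+1,n))$ together with the boundary clauses, with part~(2) relying on part~(1) in its inductive step. The paper's argument is just a terser version of exactly this.
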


\eject

\begin{proof}

\vspace*{-6mm}
\begin{enumerate}
\item $A(0,1)=2$ \,by definition, and for any $i\in\n,\\\ \,A(i+1,1)=A(i,A(i+1,0))=A(i,1)$.
\item $A(0,2)=2^2=4$ \,by definition, and for any $i\in\n,\\ \,A(i+1,2)=A(i,A(i+1,1))=A(i,2)$ by (1).
\item $A(1,0)=1=exp^{(0)}(1)$  \,by definition, and for any $i\in\n,\\ \,A(1,i+1)=A(0,A(1,i))=exp(A(1,i))$.
\end{enumerate}

\vspace*{-6mm}
\end{proof}

The link between $A_k$ and $A_{k+1}$ is the following one:

\begin{fact}\label{fact p.4 TSVP}
 For any $k,\,n\in\n,\ \,A_{k+1}(n)=A_k^{(n)}(1)$.
\end{fact}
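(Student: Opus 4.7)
The plan is to fix $k \in \n$ and prove the statement $A_{k+1}(n) = A_k^{(n)}(1)$ by induction on $n$, using directly the defining recursion of $A$ from Definition~\ref{def-A}.

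For the base case $n = 0$, the initial condition $A(k+1, 0) = 1$ gives $A_{k+1}(0) = 1$, which matches $A_k^{(0)}(1) = 1$ by the definition of the $0$th iterate.

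For the inductive step, assuming $A_{k+1}(n) = A_k^{(n)}(1)$, I would unfold $A_{k+1}(n+1) = A(k+1, n+1)$ using the recursive clause $A(k+1, n+1) = A(k, A(k+1, n))$, which rewrites as $A_k(A_{k+1}(n))$. The induction hypothesis substitutes $A_k^{(n)}(1)$ for $A_{k+1}(n)$, yielding $A_k(A_k^{(n)}(1))$, and the definition $g^{(m+1)}(n) = g(g^{(m)}(n))$ collapses this to $A_k^{(n+1)}(1)$, closing the induction.

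There is no real obstacle here: the identity is essentially a reformulation of the recursive clause for $A$, and the proof is a two-line induction. The only thing to be careful about is matching conventions — namely that $A_k^{(0)}$ is the identity (so evaluated at $1$ it gives $1$, agreeing with $A(k+1,0)=1$) and that the iteration is indexed by the second argument $n$ of $A_{k+1}$, not by $k$.
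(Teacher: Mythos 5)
Your proof is correct and is essentially identical to the paper's: both fix $k$, induct on $n$, use $A_{k+1}(0)=1=A_k^{(0)}(1)$ as the base case, and rewrite $A_{k+1}(n+1)=A_k(A_{k+1}(n))=A_k(A_k^{(n)}(1))=A_k^{(n+1)}(1)$ in the inductive step. Nothing further is needed.
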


\begin{proof}
Let $k$ be fixed. This is true for $n=0$: $A_{k+1}(0)=1=A_k^{(0)}(1)$.\\
Let us assume the equality holds for \,$n\in\n$. Then\nl
\espa$A_{k+1}(n+1)=A_k(A_{k+1}(n))=A_k(A_k^{(n)}(1))=A_k^{(n+1)}(1)$.
\end{proof}

We deduce from Fact~\ref{fact p3}, some lower bounds:

\begin{fact}\label{case3} $A_3(3)>exp^{(4)}(3)$.
\end{fact}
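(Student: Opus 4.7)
The plan is to compute $A_3(3)$ explicitly by successively applying Fact~\ref{fact p.4 TSVP} together with Fact~\ref{fact p3}(3), and then to compare the resulting tower of exponentials with $\exp^{(4)}(3)$ using monotonicity.

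First I would unfold $A_3(3)$ from the outside in. By Fact~\ref{fact p.4 TSVP}, $A_3(3)=A_2^{(3)}(1)$, and likewise $A_2(n)=A_1^{(n)}(1)$. Combining this with Fact~\ref{fact p3}(3), which states $A_1(i)=\exp^{(i)}(1)$, I can compute:
\begin{itemize}[topsep=-1mm,itemsep=-.2ex]
\item $A_2(1)=A_1(1)=\exp(1)=2$;
\item $A_2(2)=A_1(A_2(1))=A_1(2)=\exp^{(2)}(1)=4$;
\item $A_2(3)=A_1(A_2(2))=A_1(4)=\exp^{(4)}(1)=65536$;
\item $A_3(3)=A_2(A_2(A_2(1)))=A_2(A_2(2))=A_2(4)=A_1(A_2(3))=A_1(65536)=\exp^{(65536)}(1)$.
\end{itemize}

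Next, I would compare $\exp^{(65536)}(1)$ to $\exp^{(4)}(3)$. Since $\exp$ is strictly increasing on $\n$ and $\exp(x)>x$ for $x\geq 1$, the sequence $m\mapsto \exp^{(m)}(1)$ is strictly increasing once $m\geq 1$. In particular, $\exp^{(65536)}(1) > \exp^{(6)}(1)$, because $65536>6$. On the other hand,
\[
\exp^{(6)}(1)=\exp^{(4)}\bigl(\exp^{(2)}(1)\bigr)=\exp^{(4)}(4)>\exp^{(4)}(3),
\]
where the last inequality follows from the strict monotonicity of $\exp^{(4)}$ and $4>3$. Chaining these two inequalities yields $A_3(3)=\exp^{(65536)}(1)>\exp^{(4)}(3)$.

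There is no genuine obstacle here; the argument is purely computational. The only point requiring care is bookkeeping between the iterate $A_1^{(n)}$ and the value $A_1(n)$ (and similarly for $A_2$), which is handled cleanly by invoking Fact~\ref{fact p.4 TSVP} and Fact~\ref{fact p3}(3) in the right order. Once the equality $A_3(3)=\exp^{(65536)}(1)$ is established, routing through the intermediate value $\exp^{(6)}(1)=\exp^{(4)}(4)$ gives the strict inequality with almost no calculation.
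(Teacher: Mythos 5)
Your proof is correct and follows essentially the same route as the paper: both unfold $A_3(3)$ down to $A_1(2^{16})=\exp^{(2^{16})}(1)$ and then conclude by monotonicity of iterated exponentials (the paper writes the final comparison as $\exp^{(2^{16}-2)}(4)>\exp^{(4)}(3)$, while you route through $\exp^{(6)}(1)=\exp^{(4)}(4)$, a cosmetic difference). The only presentational variation is that you unfold via Fact~\ref{fact p.4 TSVP} where the paper uses the defining recursion together with Fact~\ref{fact p3}(2); the computation is identical.
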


\begin{proof}

\vspace*{-4mm}
$\begin{array}[t]{cclr}
A_3(3)&=&  A_2(A_3(2))&\\
      &=& A_2(4)  &(\text{by~\ref{fact p3}(2))}\\
      &=& A_1(A_2(3))&\\
      &=& A_1(A_1(A_2(2)))&\\
      &=&A_1(A_1(4))  &(\text{by~\ref{fact p3}(2))}\\
      &=& A_1(exp^{(4)}(1)) &(\text{by~\ref{fact p3}(3))}\\
      &=&A_1(2^{16})&\\
      &=&exp^{(2^{16}-2)}(4)&(\text{by~\ref{fact p3}(3))}\\
      &>&  exp^{(4)}(3).&
\end{array}$

\vspace*{-5mm}
\end{proof}

\smallskip
\begin{claim}\label{claim p4}
For any $n\geq 3,\ \,A_3(n)>exp^{(4)}(n)$.
\end{claim}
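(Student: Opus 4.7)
The plan is to proceed by induction on $n \geq 3$. The base case $n=3$ is supplied directly by Fact~\ref{case3}.

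For the inductive step, assume $A_3(n) > \exp^{(4)}(n)$ for some $n \geq 3$. Writing $A_3(n+1) = A_2(A_3(n))$ by Definition~\ref{def-A} and using the strict monotonicity of $A_2$ (Lemma~\ref{property_Ak}(a)) together with the fact that both $A_3(n)$ and $\exp^{(4)}(n)$ are integers, I get
$$A_3(n+1) \ \geq\ A_2\bigl(\exp^{(4)}(n)+1\bigr) \ =\ A_1\bigl(A_2(\exp^{(4)}(n))\bigr) \ =\ \exp^{(M)}(1),$$
where $M := A_2(\exp^{(4)}(n))$, using the recursion $A_2(j+1) = A_1(A_2(j))$ and the identity $A_1(j) = \exp^{(j)}(1)$ from Fact~\ref{fact p3}(3). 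The target $\exp^{(4)}(n+1)$ rewrites as $\exp^{(3)}(2^{n+1})$, so the whole task reduces to the single inequality $\exp^{(M)}(1) > \exp^{(3)}(2^{n+1})$.

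For this last inequality I would isolate the elementary lemma ``$\exp^{(k+m)}(1) > \exp^{(k)}(m)$ whenever $k,m \geq 1$'', which follows at once from $\exp^{(m)}(1) \geq m+1$ (trivial induction using $2^x \geq x+1$) combined with monotonicity of $\exp^{(k)}$. Applied with $k = 3$ and $m = 2^{n+1}$, it reduces the task to verifying $M \geq 3 + 2^{n+1}$. Since $A_2$ is nondecreasing and $A_2(x) \geq x$ (again a trivial induction), we have $M \geq \exp^{(4)}(n)$; and for $n \geq 3$ a direct comparison of the tower $\exp^{(4)}(n) = 2^{2^{2^{2^n}}}$ with $3 + 2^{n+1}$ (the former dwarfs the latter) closes the argument. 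The main obstacle is not conceptual but rather the careful bookkeeping of towers of exponentials whose heights and caps differ; the little lemma $\exp^{(k+m)}(1) > \exp^{(k)}(m)$ is exactly the tool that keeps this comparison under control.
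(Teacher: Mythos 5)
Your proof is correct, and the overall skeleton is the same as the paper's: induction on $n$, base case $n=3$ from Fact~\ref{case3}, then $A_3(n+1)=A_2(A_3(n))$ plus strict monotonicity of $A_2$. Where you diverge is in the inductive step's final comparison. The paper lower-bounds $A_2$ by $A_0=exp$ directly (Lemma~\ref{property_Ak}(b)), giving $A_2(exp^{(4)}(n))\geq exp^{(5)}(n)=exp^{(4)}(2^n)\geq exp^{(4)}(n+1)$ in one line --- the key trick being to absorb the extra exponential into the argument of the tower via $2^n\geq n+1$. You instead unroll $A_2$ one step through $A_1$, land on the enormous quantity $exp^{(M)}(1)$ with $M=A_2(exp^{(4)}(n))$, and then need an auxiliary lemma ($exp^{(k+m)}(1)>exp^{(k)}(m)$, via $exp^{(m)}(1)\geq m+1$) to compare towers of different heights and caps. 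Both arguments are sound; yours costs an extra lemma and a numerical check ($exp^{(4)}(n)\geq 3+2^{n+1}$ for $n\geq 3$) that the paper avoids entirely, but the tower-comparison lemma you isolate is a reusable tool, whereas the paper's shortcut is tailored to this one inequality. If you want to match the paper's economy, note that $A_2\geq A_0$ already does all the work: there is no need to pass through $A_1$ at all.
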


\begin{proof} By the previous fact, this is true for \,$n=3$.\\
We thus argue by induction, assuming the inequality holds for \,$n\geq 3$. Then
\begin{align*}
A_3(n+1)=A_2(A_3(n))>A_2(exp^{(4)}(n))&\geq\, A_0(exp^{(4)}(n))\\&\geq\, exp^{(5)}(n)=exp^{(4)}(2^n)\geq exp^{(4)}(n+1).
\end{align*}

\vspace*{-7mm}
\end{proof}

\begin{claim}\label{claim1 p5}
For any $k\geq 3,\ \,A_k(3)>exp^{(4)}(k)$.
\end{claim}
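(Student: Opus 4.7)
The plan is to argue by induction on $k$, starting from $k=3$, exactly parallel to the style of Claim~\ref{claim p4}. The base case $k=3$ is immediate from Fact~\ref{case3}. For the inductive step, the strategy is to peel off one layer of the Ackermann recursion to convert $A_{k+1}(3)$ into an expression involving $A_k(3)$, to which the induction hypothesis applies, and then absorb the extra $+1$ in the argument of $exp^{(4)}$ by invoking one further exponential.

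More concretely, given the inductive hypothesis $A_k(3) > exp^{(4)}(k)$ (for some $k \geq 3$), I would first rewrite
\[
A_{k+1}(3) \;=\; A_k^{(3)}(1) \;=\; A_k\bigl(A_k(A_k(1))\bigr) \;=\; A_k(A_k(2)) \;=\; A_k(4),
\]
using Fact~\ref{fact p.4 TSVP} together with Fact~\ref{fact p3}(1)--(2). Then I would unfold one more step of the defining recursion to obtain $A_k(4) = A_{k-1}(A_k(3))$, which is where the induction hypothesis enters: by strict monotonicity of $A_{k-1}$ (Lemma~\ref{property_Ak}(a)),
\[
A_{k-1}(A_k(3)) \;>\; A_{k-1}\bigl(exp^{(4)}(k)\bigr).
\]

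Next, I would compare $A_{k-1}$ to $A_0 = exp$. Since $k-1 \geq 2$ and $exp^{(4)}(k) \geq 1$, Lemma~\ref{property_Ak}(b) yields $A_{k-1}\bigl(exp^{(4)}(k)\bigr) \geq A_0\bigl(exp^{(4)}(k)\bigr) = exp^{(5)}(k)$. Finally I would rewrite $exp^{(5)}(k) = exp^{(4)}(2^k)$ and observe that $2^k \geq k+1$ (for $k \geq 1$), so that by monotonicity of $exp^{(4)}$ the chain closes at $exp^{(4)}(k+1)$, completing the induction.

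There is no genuine obstacle here; the only care point is the bookkeeping of which facts are allowed at each step, in particular that the passage from $A_k(4)$ to $A_{k-1}(A_k(3))$ uses the defining recursion $A_{k}(n+1) = A_{k-1}(A_k(n))$, which is legitimate because $k \geq 3$ (so $k-1 \geq 2$, guaranteeing that $A_{k-1} \geq exp$ and keeping Lemma~\ref{property_Ak}(b) applicable). The mild subtlety is that we lose one index of $A$ going from $A_k$ to $A_{k-1}$ but still keep enough strength to upgrade $exp^{(4)}(k)$ to $exp^{(5)}(k)$, which is precisely what is needed to advance $k$ to $k+1$ inside $exp^{(4)}$.
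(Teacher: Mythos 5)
Your proof is correct and follows essentially the same route as the paper's: both reduce $A_{k+1}(3)$ to $A_k(4)=A_{k-1}(A_k(3))$, apply the induction hypothesis, drop down to $A_0=exp$ via Lemma~\ref{property_Ak}(b), and close with $exp^{(4)}(2^k)\geq exp^{(4)}(k+1)$. The only cosmetic difference is that you reach $A_k(4)$ through the iterate formula of Fact~\ref{fact p.4 TSVP} rather than directly via $A_{k+1}(3)=A_k(A_{k+1}(2))=A_k(4)$.
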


\begin{proof} By~\ref{case3}, this holds for $k=3$.\nl
We assume \,$A_k(3)>exp^{(4)}(k)$ \,for $k\geq 3$. Then\nl
$A_{k+1}(3)=A_k(A_{k+1}(2)) \underset{\text{\eqref{fact p3}}}{=}A_k(4)=A_{k-1}(A_k(3))\underset{\text{ind.}}{>}A_0(exp^{(4)}(k)=exp^{(4)}(2^k)
\geq\ exp^{(4)}(k+1)$.
\end{proof}

We now evaluate the complexity of the functions $Inv_{A_k}$, for $k\in\n$.

\begin{lemma}\label{lemma2 p5}
$log$ \,is computable in linear time.
\end{lemma}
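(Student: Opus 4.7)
My plan relies on the observation recorded in Definition~\ref{bidon}(5): we have $|n|-1\leq log(n)\leq |n|$, with equality $log(n)=|n|-1$ precisely when $n$ is a power of two (i.e.\ when its binary representation contains exactly one $1$-bit), and $log(n)=|n|$ otherwise. Hence computing $log(n)$ in linear time reduces to two subtasks: computing the length $|n|$ of the input in binary, and deciding whether $n$ is a power of two.

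First I would set up a multitape Turing machine that performs a single left-to-right scan of the input tape (which holds the $|n|$-bit binary representation of $n$) and, in parallel, updates two work tapes. On the first work tape it maintains a binary counter, incremented once for every input cell read, which at the end of the scan will hold $|n|$. On the second work tape it maintains a single flag, switched on as soon as a second $1$-bit is encountered in the input. Since $n\geq 1$ the input contains at least one $1$-bit, so the flag is off at the end of the scan if and only if $n$ is a power of two. A final step copies the counter to the output tape, decrementing it by one beforehand in the case where the flag is off.

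The only nontrivial ingredient in the complexity analysis is the cumulative cost of the $|n|$ increments of the binary counter. Here I would appeal to the standard amortized bound: during $|n|$ consecutive increments starting from $0$, bit $i$ of the counter flips at most $\lfloor |n|/2^i\rfloor$ times, so the total number of bit flips is bounded by $\sum_{i\geq 0}|n|/2^i\leq 2|n|$. All remaining operations (advancing the input head by one cell, updating the flag, and the final decrement of a value of magnitude at most $|n|$) obviously contribute $O(|n|)$ steps, and the output has length $O(\log|n|)$ and is thus negligible. The main---indeed only---obstacle I anticipate is formulating this amortized argument cleanly in the multitape model: one has to observe that on each increment the counter head need travel only as far as the lowest $0$-bit, and that this cost is absorbed by the geometric bound above. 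Everything else is mechanical.
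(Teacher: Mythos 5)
Your proof is correct and follows essentially the same route as the paper: a single scan of the input maintaining a reversed binary counter, with the amortized bound $\sum_i |n|/2^i = O(|n|)$ on the total cost of the increments. You are in fact slightly more careful than the paper's own argument, which only describes computing $|n|$ and leaves implicit the power-of-two adjustment needed to pass from $|n|$ to $log(n)=\lceil\log_2 n\rceil$.
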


\begin{proof}
This is folklore. We propose a simple argument suggested by one referee:
one counts in binary the number of digits of the input x.\\ The counter being written in reverse order, we change the first digit of the counter for all browsed positions on the input tape, change the second digit for every position out of 2,..., change the $k$th digit for every position $r$ on the input tape such that $r-1$ has binary representation of the form $u1^{k-1}$...\\
Hence for some constant B, if \,$2^r\leq |x|<2^{r+1}$, the number of steps required to obtain $|x|$ in binary is bounded by \ $B(|x|+\sum_{k=1}^{k=r}2^{r-(k-1)})\,=\,O(|x|)$.
\end{proof}


One notes that \,$Inv_{A_0}=\lceil \log_2\rceil=log$. \ Hence we can state:

\begin{claim}\label{inv a0}
$Inv_{A_0}$ is computable in linear time.
\end{claim}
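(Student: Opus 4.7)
The plan is entirely immediate given what has just been set up. The key observation, already recorded in the line preceding the claim, is that $Inv_{A_0}$ and the function $log$ coincide as functions on $\n\setminus\{0\}$; once this identification is made, the result follows at once from Lemma~\ref{lemma2 p5}.

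In more detail, I would first unfold the two definitions. Since $A_0(n)=2^n$, the value $Inv_{A_0}(n)$ is, by Definition~\ref{def-ack}, the least $k\in\n$ such that $2^k\geq n$. For $n\geq 1$ this least $k$ is exactly $\lceil\log_2 n\rceil$, which is the definition of $log(n)$ given in Definition~\ref{bidon}(5). The input $n=0$ is a harmless edge case that can be handled in constant time. Hence on every input, $Inv_{A_0}(n)$ is computed by the algorithm underlying Lemma~\ref{lemma2 p5} (preceded, if one wishes to be pedantic, by a constant-time check for the value $0$), and this algorithm runs in time $\+O(|n|)$.

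The main obstacle, such as there is one, is not in the present claim but in Lemma~\ref{lemma2 p5} itself: producing a Turing machine that writes out $\lceil\log_2 n\rceil$ while reading the input only a linear number of times. Since that work has already been done, the present proof reduces to a single sentence invoking the lemma after the identification $Inv_{A_0}=log$.
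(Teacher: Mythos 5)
Your proposal is correct and matches the paper exactly: the paper's entire justification is the remark ``$Inv_{A_0}=\lceil\log_2\rceil=log$'' immediately before the claim, combined with Lemma~\ref{lemma2 p5}. Your additional unfolding of the definitions (least $k$ with $2^k\geq n$ equals $\lceil\log_2 n\rceil$ for $n\geq 1$, plus the constant-time case $n=0$) is a harmless elaboration of the same one-line argument.
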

We now relate \,$Inv_{A_{k+1}}$\, to \,$Inv_{A_k}$, for $k\in\n$.

\begin{definition} \label{def nr}
Let $m,k\geq 0$ and let the sequence of integers \,$(n_r)_{r\leq s}$\, be defined inductively as follows:
\begin{itemize}[topsep=-1.5mm,itemsep=-.4ex]
\item $n_0=m$,
\item for \,$r\geq 0$\, and \,$n_r$\, defined,
\begin{itemize}[topsep=-1.5mm,itemsep=-.8ex]
\item if $n_r\leq 1$, then we stop the construction and set $s=r$,
\item otherwise let \,$n_{r+1}=Inv_{A_k}(n_r)$.
\end{itemize}
\end{itemize}
\end{definition}

\begin{claim}\label{nr stops} Let \,$m,k,s$\, be as in the above definition.
\begin{enumerate}[label=(\alph*),topsep=-1mm,itemsep=-.4ex]
\item The construction does stop.
\item $Inv_{A_{k+1}}(m)=s$.
\end{enumerate}
\end{claim}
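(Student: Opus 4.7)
The plan is to establish a recursive identity for $Inv_{A_{k+1}}$ and then unfold it along the sequence $(n_r)_{r\leq s}$: namely, $Inv_{A_{k+1}}(m) = 0$ when $m\leq 1$, and $Inv_{A_{k+1}}(m) = 1 + Inv_{A_{k+1}}(Inv_{A_k}(m))$ when $m \geq 2$. Part (a) will follow from a simple strict-decrease argument.

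For part (a), I first note the auxiliary bound $A_k(n) \geq n+1$ for all $n \in \n$, proved by a one-line induction on $n$ using $A_k(0) = 1$ and the strict monotonicity of $A_k$ (Lemma~\ref{property_Ak}(a)). This gives $A_k(n_r - 1) \geq n_r$ whenever $n_r \geq 1$, so by definition of $Inv_{A_k}$, $n_{r+1} \leq n_r - 1 < n_r$ as long as the recursion continues. Hence $(n_r)$ strictly decreases and must hit a value $\leq 1$ in at most $m$ steps.

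For part (b), the case $m \leq 1$ reduces to $A_{k+1}(0) = 1 \geq m$, giving $Inv_{A_{k+1}}(m) = 0 = s$ directly. For $m \geq 2$, one has $A_{k+1}(0) = 1 < m$, so $Inv_{A_{k+1}}(m) \geq 1$. Using the recurrence $A_{k+1}(j) = A_k(A_{k+1}(j-1))$ of Definition~\ref{def-A} combined with the general equivalence $A_k(x) \geq m \Leftrightarrow x \geq Inv_{A_k}(m)$ (which follows from the strict monotonicity of $A_k$), one obtains $A_{k+1}(j) \geq m \Leftrightarrow A_{k+1}(j-1) \geq Inv_{A_k}(m)$. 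Taking the least $j \geq 1$ on both sides yields the announced recursive identity. Iterating it along $(n_r)_{r \leq s}$ gives $Inv_{A_{k+1}}(n_r) = 1 + Inv_{A_{k+1}}(n_{r+1})$ for every $r < s$, together with $Inv_{A_{k+1}}(n_s) = 0$, from which a straightforward induction on $r$ yields $Inv_{A_{k+1}}(m) = Inv_{A_{k+1}}(n_0) = s$.

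The main obstacle is isolating and verifying the equivalence $A_k(x) \geq m \Leftrightarrow x \geq Inv_{A_k}(m)$, which is the engine of the recursion. The forward direction uses the monotonicity of $A_k$ together with $A_k(Inv_{A_k}(m)) \geq m$ by definition; the backward direction uses $A_k(Inv_{A_k}(m) - 1) < m$ (valid when $Inv_{A_k}(m) \geq 1$, i.e.\ for $m \geq 2$) together with monotonicity applied to $x \leq Inv_{A_k}(m) - 1$. Once this equivalence is in place, everything else amounts to routine unfolding of the recursion along $(n_r)$.
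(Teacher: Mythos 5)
Your proof is correct, but it takes a genuinely different route from the paper's, most notably in part (b). For part (a) the paper also shows strict decrease of $(n_r)$, but it does so by going through $A_0\leq A_k$ and the numeric inequalities $2^{n_{r+1}-1}<n_r$ and $2^t\geq 2t$; your auxiliary bound $A_k(n)\geq n+1$ (immediate from $A_k(0)=1$ and strict monotonicity) gives $n_{r+1}\leq n_r-1$ more directly and is, if anything, cleaner. For part (b) the paper argues by a ``sandwich'': it proves the two inequalities $A_{k+1}(s)\geq m$ and $A_{k+1}(s-1)<m$ separately, each by an induction comparing $n_{s-t}$ with the iterates $A_k^{(t)}(1)$, and it leans on Fact~\ref{fact p.4 TSVP} ($A_{k+1}(n)=A_k^{(n)}(1)$) to convert these iterates back into values of $A_{k+1}$. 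You instead isolate the recursion $Inv_{A_{k+1}}(m)=1+Inv_{A_{k+1}}(Inv_{A_k}(m))$ for $m\geq 2$, derived from the exact equivalence $A_k(x)\geq m\Leftrightarrow x\geq Inv_{A_k}(m)$ (which indeed holds by leastness and monotonicity, your only delicate point being correctly handled), and then unroll it along $(n_r)$ to get $Inv_{A_{k+1}}(n_r)=s-r$. This avoids Fact~\ref{fact p.4 TSVP} and the double induction entirely, and makes transparent \emph{why} Definition~\ref{def nr} computes the inverse: the sequence $(n_r)$ is literally the trace of the recursion for $Inv_{A_{k+1}}$. The paper's version has the mild advantage of reusing machinery it has already set up and of exhibiting the two defining inequalities of the inverse explicitly, but both arguments are sound.
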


\begin{proof}
(a) \ If $m\leq 1$, then the construction stops at the first step and \,$s=0$.\nl
Hence let \,$m>1$. \,We check that the sequence \,$(n_r)_r$\, is strictly decreasing.\\
By definition, as long as \,$n_{r+1}$\, is defined, \ \begin{equation}\label{7.1}
A_k(n_{r+1}-1)\,<\,n_r\,\leq \,A_k(n_{r+1})
\end{equation}
Hence \ \ $2^{n_{r+1}-1}\, =\, A_0(n_{r+1}-1)\, \leq\, A_k(n_{r+1}-1)\, <\, n_r$. \ This gives \ $2^{n_{r+1}}\,<\,2n_r$. Since for any \,$t\in\n$, \,$2^t\geq 2t$, we deduce \ $n_{r+1}<n_r$.

\medskip
\noindent (b) \ If \,$m\leq 1$, \,then \ $A_{k+1}(0)\geq m$. \ Hence \ $Inv_{A_{k+1}}(m)=0=s$.\nl
Otherwise \,$s\geq 1$\, and we verify both inequalities: \ $A_{k+1}(s)\geq m$ \,and \, $A_{k+1}(s-1)<m$.

\medskip
${\pmb{A_{k+1}(s)\geq m}}$\,:\\
We check by induction on \,$t\leq s$\, that \,$n_{s-t}\leq A_k^{(t)}(1)$.\nl
- This is true for $t=0$ since \,$n_s\leq 1$.\\
- We assume this holds for \,$t\geq 0$. By~\eqref{7.1}, we deduce
\[n_{s-(t+1)}\,\leq\,A_k(n_{s-t})\,\underset{\text{ind.}}{\leq}\,A_k(A_k^{(t)}(1))\,\leq\,A_k^{(t+1)}(1).\]
By applying the inequality to \,$t=s$, we obtain from Fact~\ref{fact p.4 TSVP}\nl\centerline{ $m=n_0\,\leq\,A_k^{(s)}(1)\,=\,A_{k+1}(s)$ \,.}

\medskip
${\pmb{A_{k+1}(s-1)< m}}$\,:\\
We check by induction on \,$1\leq t\leq s$\, that \ $A_k^{(t-1)}(1)<n_{s-t}$.\nl
- Let \,$t=1$. \,then \,$n_{s-1}>1=A_k^{(0)}(1)$. \,Hence the inequality holds.\\
- We assume \,$n_{s-t}>A_k^{(t-1)}(1)$\, for \,$t\geq 1$. \ Hence \,$A_k^{(t-1)}(1)\leq n_{s-t}-1$.
  We deduce \nl \centerline{$A_k^{(t)}(1)=A_k(A_k^{(t-1)}(1))\leq A_k(n_{s-t}-1)\underset{\eqref{7.1}}{<} n_{s-(t+1)}$.}\nl
  Applying the inequality to \,$t=s$, we obtain \
  $m=n_0>A_k^{(s-1)}(1)\underset{Fact\ref{fact p.4 TSVP}}{=}A_{k+1}(s-1)$.\smallskip

We conclude that \,$Inv_{A_{k+1}}(m)=s$.
\end{proof}

From this characterization of \,$Inv_{A_{k+1}}$, we shall derive:

\begin{lemma}\label{time invk}
For any \,$k\in\n$, \,$Inv_{A_k}$ \,is computable in linear time.
\end{lemma}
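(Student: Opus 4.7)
The plan is a straightforward induction on $k$. The base case $k = 0$ is Claim \ref{inv a0}. For the inductive step, assume $Inv_{A_k}$ is computable in linear time; to compute $Inv_{A_{k+1}}(m)$ we implement Claim \ref{nr stops} directly, maintaining a register $n$ (initialised to $m$) and a counter $s$ (initialised to $0$), and looping as follows: if $n \leq 1$ we halt and output $s$, otherwise we overwrite $n$ with $Inv_{A_k}(n)$ (computed by the inner procedure) and increment $s$ by one. Correctness is immediate from Claim \ref{nr stops}(b).

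The main obstacle is the time analysis, which rests on the shrinkage estimate already proved inside Claim \ref{nr stops}(a): $2^{n_{r+1}} < 2 n_r$, whence $n_{r+1} \leq |n_r|$ and therefore
\[ |n_{r+1}| \;\leq\; \log_2 |n_r| + O(1). \]
Starting from $|n_0| = |m|$, this recurrence produces an iterated-logarithm sequence: $|n_1| = O(\log_2 |m|)$, $|n_2| = O(\log_2 \log_2 |m|)$, and so on. Each term is less than half the previous one once $|m|$ is large enough (since $\log_2 L \leq L/2$ for $L$ sufficiently large), so
\[ \sum_{r=0}^{s} |n_r| \;=\; O(|m|). \]
By the inductive hypothesis the $r$th inner call to $Inv_{A_k}$ costs $O(|n_r|)$, so all inner calls together consume $O(|m|)$ time.

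The residual bookkeeping is negligible: each test $n \leq 1$ takes constant time, and the counter $s$ (which equals $Inv_{A_{k+1}}(m)$ at termination) is vanishingly small compared with $|m|$, so maintaining the counter and writing out the final value contribute $o(|m|)$. Combining everything yields an $O(|m|)$ total running time for $Inv_{A_{k+1}}$, closing the induction. The substantive point is thus entirely the summability estimate on $|n_r|$; everything else is routine tape-management on the multitape Turing machine.
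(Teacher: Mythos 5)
Your proposal is correct and follows essentially the same route as the paper: induction on $k$ with Claim~\ref{inv a0} as the base case, computing $Inv_{A_{k+1}}(m)$ by iterating $Inv_{A_k}$ exactly as in Definition~\ref{def nr} and Claim~\ref{nr stops}, and charging the $r$th inner call $O(|n_r|)$ by the induction hypothesis; the paper merely packages the cost bound differently, isolating the first two iterations and multiplying the iteration count $s\leq 2\,log^{(2)}(m)$ (its Claim~\ref{bound on s}) by the uniform bound $|n_2|=O(log^{(3)}(m))$, instead of summing a geometrically decaying series. One small imprecision in your write-up: the halving $|n_{r+1}|\leq |n_r|/2$ holds only while $|n_r|$ exceeds a fixed threshold (not ``once $|m|$ is large enough''), so the finitely many terms below that threshold need a separate word --- which is immediate, since $(n_r)$ is strictly decreasing and hence only a bounded number of constant-cost steps remain.
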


\begin{proof} \ We shall argue by induction on \,$k\in\n$.\nl
- This holds for \,$k=0$\, by Claim~\ref{inv a0}.\nl
- We assume now that \,$Inv_{A_k}$\, is computable in linear time and we check that it is also the case for \,$Inv_{A_{k+1}}$.

Starting with \,$m\geq 2^{4}$,  we shall evaluate the time required to obtain the sequence \,$(n_r)_{r\leq s}$\, of Definition~\ref{def nr}. We recall that \ $n_{r+1}=Inv_{A_k}(n_r)$, \,if $n_r>1$.

Since \,$n_{s-1}>1$, \,$A_k(0)=1$, \,$n_s\leq 1$\, and \,$n_s=Inv_{A_k}(n_{s-1})$, \,necessarily \,$n_s=1$.

\begin{claim}\label{bound on s}
For \,$m\geq 4$, \,$s\,\leq\, 2\,log^{(2)}(m)$.
\end{claim}

\begin{proof} \ We note that since $A_k(1)=2<m=n_0$, necessarily $n_1>1$ and $s\geq 2$.\nl
By definition, $n_1$ and $n_2$ satisfy the following:\nl
\centerline{$2^{n_1-1}\leq A_k(n_1-1)<m$ \ and \ $2^{n_2-1}\leq A_k(n_2-1)<n_1$.}\nl
Now \,$2^{n_1}<2m$\, gives \,$n_1\leq\,log(m)$. Similarly we obtain \, $n_2\leq\,log(n_1)$ and hence
\begin{equation}\label{e1}
n_2\,\leq\,log^{(2)}(m)
\end{equation}
Since \,$(n_r)_{r\leq s}$\, is strictly decreasing and $n_s=1$, one checks that \,$n_2\,\geq\, 1+(s-2)$.\nl
Hence \ $s\,\leq\,n_2+1\,\underset{\eqref{e1}}{\leq}\,log^{(2)}(m)+1\,\leq\,2\,log^{(2)}(m)$ (the last inequality holds because $log^{(2)}(m)\geq 1$).
\end{proof}

Let \,$m\geq 2^4$. \,By induction hypothesis, there is a constant $C$ such that for any $u\in\n$, the computation of \,$Inv_{A_k}(u)$\, takes at most \,$C|u|$\, steps.\\
\esp - Hence the obtention of $n_1$ and $n_2$ takes at most \,$2C|m|$\, steps.\\
\esp - We now bound the time required to compute \,$(n_r)_{2<r\leq s}.$ \,For each \,$2\leq r<s$, the obtention of $n_{r+1}$ (given $n_r$) takes at most \,$C|n_2|$\, steps. Since \,$m\geq 2^4$, we have:\nl
\centerline{ $|n_2|\underset{\eqref{e1}}{\leq}|log^{(2)}(m)|\,\leq\,log^{(3)}(m)+1\,\leq\,2\,log^{(3)}(m)$,} \\
Hence to treat all \,$2\leq r<s$, by Claim~\ref{bound on s}, one needs at most \,$4C\,log^{(2)}(m)\,log^{(3)}(m)$\, steps.\\ There is a constant $D$ such that for any $m\in\n$, one has\nl
\centerline{ $log^{(2)}(m)\,log^{(3)}(m)\,\leq\,D\,log(m)\,\leq\,D\,|m|$.}\nl
Hence we deduce that $Inv_{A_{k+1}}(m)$ is computable in time \,$\+O(|m|)$.
\end{proof}

\medskip
There may be a way to use G.\,Nivasch (see~\cite{niva})  development on inverse Ackermann function to evaluate the complexity of the functions $Inv_{A_k}$. One would have to clarify the link between Nivasch's function $\alpha_k$ and our $Inv_{A_k}$.\vspace*{-1mm}

\section{Encoding sequences}

In this section, we introduce the coding of couples, triples or finite sequences of integers of arbitrary length.

\begin{definition}\label{def couple}~
\begin{itemize}[topsep=-1mm,itemsep=-.2ex]
\item For $u,\,v\in\n$, let \,$\displaystyle \langle u,v\rangle\,=\,\frac{(u+v)(u+v+1)}{2} +v$. \ Then \,$\langle \cdot,\cdot\rangle:\n\times\n\to\n$ \ is a bijection. \vspace*{-0.5mm}
\item Let $(\cdot)_0$\, and \,$(\cdot)_1$\, be the ``inverses" of \,$\langle \cdot,\cdot\rangle$: for any $w\in\n,\ \,\langle(w)_0,(w)_1\rangle=w$.
\end{itemize}\vspace*{-0.5mm}
\end{definition}

Classically one has:

\begin{claim}\label{p13 claim1}
The function \,$\langle\cdot,\cdot\rangle$\, and its inverses \,$(\cdot)_0$,\,$(\cdot)_1$
 are polynomial time computable.
 \end{claim}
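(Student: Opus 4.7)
The plan is to treat the two directions separately. For the forward direction, computing $\langle u,v\rangle$ from $u,v$, I would simply observe that the definition uses only addition, multiplication, and a division by two: form $s=u+v$, then $s(s+1)$, then halve (a right shift on the binary representation), then add $v$. Each of these operations is polynomial in $|u|+|v|$, and the resulting integer has magnitude at most $(u+v+1)^2$, so bit-length $O(|u|+|v|)$. Composing them yields a polynomial-time computation.

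For the inverses, given $w$ I would first recover $s := (w)_0+(w)_1$, namely the unique $s\in\n$ such that
\[
T(s)\,\leq\, w\, <\, T(s+1)\qquad\text{where}\qquad T(s)=\frac{s(s+1)}{2}.
\]
Since $T(s)\leq w$ forces $s\leq\sqrt{2w}$, $s$ has bit-length at most $O(|w|)$, so one can locate $s$ by binary search on the interval $[0,2^{|w|}]$: at each step, compute $T(t)$ for the candidate midpoint $t$ using one multiplication and a shift, and compare to $w$. This uses $O(|w|)$ iterations of polynomial-time work, so it is polynomial overall. Once $s$ is known, set $(w)_1 := w - T(s)$ and $(w)_0 := s - (w)_1$; both are polynomial-time subtractions.

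The main (minor) obstacle is making sure the inversion step does not require a genuine square root or a non-algebraic operation; binary search sidesteps this cleanly, since the only ingredient needed is the polynomial-time computability of $t\mapsto T(t)$ and of integer comparison. Correctness of the binary search rests on the strict monotonicity of $T$ on $\n$, and bijectivity of $\langle\cdot,\cdot\rangle$ (already stated in Definition~\ref{def couple}) then guarantees that the $s$ returned is the right one and that $0\leq (w)_1\leq s$, so $(w)_0\in\n$ is well-defined. This completes the argument.
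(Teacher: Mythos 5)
Your argument is correct and follows essentially the same route as the paper: to invert, locate the index $a=(w)_0+(w)_1$ as the unique integer with $T(a)\leq w<T(a+1)$ and then recover the components by subtraction. The only cosmetic difference is that you binary-search the triangular function $T$ directly, whereas the paper computes the integer square root of $2w$ (itself a quadratic-time root-finding) and then tests which of $\alpha,\alpha-1$ is the right index.
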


 \begin{proof} To answer a referee's request, we justify the second assertion.

 Let \,$s\in\n$. We must find the integer $a$ such that \ \ $a(a+1)\leq 2s<(a+1)(a+2)$ \ because if \,$\displaystyle \Delta=s-\frac{(a(a+1))}{2}$, then one has \ $(s)_0=a-\Delta$ \,and\, $(s)_1=\Delta$.

 It takes quadratic time to get the ``square root" of $2s$: the integer $\alpha$ such that\\  $\alpha^2\leq 2s<(\alpha+1)^2$. Then either $\alpha$ or $\alpha-1$ is the expected $a$.
 \end{proof}

 We derive the coding of triples:

\begin{definition}\label{def triple}
For \,$u,v,w\in\n$, \,let \,$\langle u,v,w\rangle=\langle\langle u,v\rangle,w\rangle$.
\end{definition}

\begin{claim}\label{p13 claim2}
For \,$u,v,w\in\n$, \,$\langle u,v,w\rangle\,\leq\,8(u+v+w)^4$.
\end{claim}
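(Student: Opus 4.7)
The plan is to iterate a quadratic bound on the pairing $\langle\cdot,\cdot\rangle$ and then handle the additive constants that arise in the final estimate.

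First I would establish the auxiliary bound $\langle a,b\rangle \leq (a+b+1)^2$ for all $a,b\in\n$. Writing $N=a+b$, this is immediate from
\[
\langle a,b\rangle \,=\, \frac{N(N+1)}{2} + b \,\leq\, \frac{N(N+1)}{2} + N \,=\, \frac{N(N+3)}{2}
\]
together with the elementary polynomial inequality $\frac{N(N+3)}{2} \leq (N+1)^2$ (equivalent to $N^2+N+2\geq 0$). Applying this bound twice to the nested pair of Definition~\ref{def triple} gives
\[
\langle u,v,w\rangle \,=\, \langle\langle u,v\rangle,w\rangle \,\leq\, (\langle u,v\rangle+w+1)^2 \,\leq\, \bigl((u+v+1)^2 + w+1\bigr)^2.
\]
Using $w+1 \leq (w+1)^2$ and $a^2+b^2 \leq (a+b)^2$ for $a,b\geq 0$, the inner expression is at most $\bigl((u+v+1)+(w+1)\bigr)^2 = (u+v+w+2)^2$, so that $\langle u,v,w\rangle \leq (u+v+w+2)^4$.

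Setting $s=u+v+w$, I then split on the size of $s$ to reach the target constant. For $s\geq 3$, $s+2 \leq 5s/3$ yields $(s+2)^4 \leq (5/3)^4 s^4 = 625\,s^4/81 < 8s^4$. For $s\in\{0,1,2\}$ only finitely many triples $(u,v,w)$ arise and a direct computation suffices (for instance, the maximum value of $\langle u,v,w\rangle$ on $s=2$ is $\langle 0,2,0\rangle=15 \leq 128$).

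The main obstacle is really just the tightness of the constant $8$: the clean estimate $(u+v+w+2)^4$ is only asymptotically smaller than $8(u+v+w)^4$, which is why a finite case check at small $s$ is unavoidable along this route; sharper inequalities could remove it, but the enumeration is the most economical option.
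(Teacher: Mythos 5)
Your argument is correct: the auxiliary bound $\langle a,b\rangle\leq(a+b+1)^2$ is right, iterating it gives $\langle u,v,w\rangle\leq(u+v+w+2)^4$, and the split at $s\geq3$ (where $s+2\leq 5s/3$ and $(5/3)^4=625/81<8$) together with the finite check at $s\in\{0,1,2\}$ closes the gap. The paper takes a closely related but slicker route: instead of $\langle a,b\rangle\leq(a+b+1)^2$, it uses $\langle a,b\rangle\leq 2(a+b)^2$ (obtained from $\langle a,b\rangle<\langle a+b+1,0\rangle=\tfrac{(a+b+1)(a+b+2)}{2}$). That choice is multiplicatively homogeneous, so it composes with no additive slack: $\langle 2(u+v)^2,w\rangle\leq 2(2(u+v)^2+w)^2\leq 2\bigl(2(u+v+w)^2\bigr)^2=8(u+v+w)^4$, using $w\leq 2w^2$ and $a^2+b^2\leq(a+b)^2$. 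The trade-off is instructive: your per-step bound looks tighter, but the "+1" forces the asymptotic comparison and a finite case check; the paper's looser constant $2$ absorbs into the nesting cleanly and gives the target constant $8$ in one line. Both are elementary and correct — the paper's is just arranged so the constant falls out without enumeration.
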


\begin{proof} \ If \,$u+v=k$, then \,$\langle u,v\rangle\,<\,\langle k+1,0\rangle$. \ Hence
\begin{equation}\label{e5}
\langle u,v\rangle\,\leq\,\frac{(k+1)(k+2)}{2}-1\,\leq\,2k^2\,=2(u+v)^2
\end{equation}
We deduce
\begin{align*}
\langle u,v,w\rangle \leq\ \langle\langle u,v\rangle,w\rangle & \leq\ \langle 2(u+v)^2,w\rangle\espa \text{\ (by~\eqref{e5})}\\
       & \leq\ 2(2(u+v)^2+w)^2\esp \text{(by~\eqref{e5})}\\
       & \leq\ 8(u+v+w)^4.
\end{align*}

\vspace*{-6mm}
\end{proof}

In order to deal with finite sequences of arbitrary length of integers, we follow one referee's suggestion: writing successively the integers under binary representation while separating them with a new symbol. To keep binary sequences, we replace the symbol 0 by 00, 1 by 11 and the new symbol by 01. We thus consider the following:

\begin{definition}\label{code string}
Let $Seq$ be the predicate on $\n$ defined as follows: for $s\in\n$,
\begin{enumerate}[topsep=-1mm,itemsep=-.2ex]
\item $Seq(s)$ \ iff
 $\ \ s=\sum_{i<2t}\varepsilon_i2^i, \text{ for }t>1 \text{  such that}$
\begin{enumerate}[label=\alph*),topsep=-1mm,itemsep=-.2ex]
\item $\varepsilon_0=\varepsilon_1$
\item $\varepsilon_{2t-2}=0, \ \varepsilon_{2t-1}=1$
\item $\text{ for any }j<t-2,\text{ if }\varepsilon_{2j}=0, \varepsilon_{2j+1}=1,\text{ then }\varepsilon_{2j+2}=\varepsilon_{2j+3}.$
\end{enumerate}
\item Let \ $S(s)=\{i<t:\varepsilon_{2i}=0,\ \varepsilon_{2i+1}=1\}\ \text{ and }\ l(s)=|S(s)|$ \,(the cardinality of \,$S(s)$). If \ $(i_j)_{j<l(s)}$ is an increasing enumeration of \,$S(s)$, then we set
\begin{itemize}[topsep=-1mm,itemsep=-.2ex]
\item $s(0)=\sum_{n<i_0}\varepsilon_{2n}2^{i_0-1-n}$ and
\item for $1\leq j<l(s)$, \ $s(j)=\sum_{i_{j-1}<n<i_j}\varepsilon_{2n}2^{(i_{j}-1)-n}$.
\end{itemize}
$(s(j))_{j<l(s)}$ \,is the sequence of integers encoded in $s$.
 \end{enumerate}
\end{definition}

We note the following:

\begin{fact}\label{size_sequence}~
Let $\mu,l\in\n$ \,and\, $\mathbf{a}=(a_i)_{i<l}$ \,be a sequence of integers such that for any \,$i<l,\ a_i\leq \mu$. Then by the previous definition, we can encode \,$\mathbf{a}$\, in $s\in\n$ such that $Seq(s)$ holds, $l(s)=l$, for any $i<l(s),\ s(i)=a_i$ \ and \ $2l\leq |s|\leq 2l(|\mu|+1)$.
\end{fact}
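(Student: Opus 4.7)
The plan is to construct $s$ explicitly by concatenating, from the low bit positions of $s$ upward, the ``doubled'' binary expansions of the $a_i$ interspersed with the separator pair $(0,1)$, one separator after each $a_i$, and then to verify directly that the resulting $s$ satisfies all four requirements.

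First I fix, for each $i<l$, the binary representation $b_{i,|a_i|-1}b_{i,|a_i|-2}\cdots b_{i,0}$ of $a_i$ (so that $a_i=\sum_{k<|a_i|}b_{i,k}2^k$) and set $t=l+\sum_{i<l}|a_i|$. I then define the bit string $(\varepsilon_p)_{p<2t}$ pair by pair: for each $i<l$, I place $|a_i|$ consecutive non-separator pairs $(b_{i,|a_i|-1-n},b_{i,|a_i|-1-n})$ for $n=0,\dots,|a_i|-1$, followed by one separator pair $(0,1)$; I concatenate these blocks for $i=0,1,\dots,l-1$ in order of increasing position. Finally I set $s=\sum_{p<2t}\varepsilon_p 2^p$.

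Next I check the requirements. Condition~(a) of Definition~\ref{code string} holds because $(\varepsilon_0,\varepsilon_1)$ is the doubled top bit of $a_0$; condition~(b) holds because the topmost pair is by construction the last separator; condition~(c) holds because each separator I placed is immediately followed by a non-separator pair (a doubled bit of the next $a_i$), except the final separator which sits at pair index $t-2$ and is therefore outside the range $j<t-2$ considered in~(c). The set $S(s)$ consists precisely of the $l$ separator positions, so $l(s)=l$. For $s(i)=a_i$, the change of variable $k=i_j-1-n$ in the decoding formula turns $\sum_n \varepsilon_{2n}2^{i_j-1-n}$ into $\sum_{k<|a_i|}b_{i,k}2^k=a_i$, using precisely the high-to-low placement chosen above.

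Finally the length bound is immediate: since $\varepsilon_{2t-1}=1$, we have $|s|=2t=2l+2\sum_{i<l}|a_i|$; the bounds $1\le|a_i|\le|\mu|$ then give $|s|\ge 2l+2l\ge 2l$ and $|s|\le 2l+2l|\mu|=2l(|\mu|+1)$. The only delicate point in the whole argument is bookkeeping: one must respect the unusual convention of Definition~\ref{code string} that places the most significant bit of each $a_i$ at the \emph{lowest} position within its block, and verify that the separator pairs really do not land in consecutive pair positions. Once the indexing is fixed, everything reduces to straightforward verification.
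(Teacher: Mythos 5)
Your construction is correct and is exactly the encoding the paper intends (the paper states this Fact without proof, as an immediate consequence of Definition~\ref{code string}); the verification of conditions (a)--(c), of $l(s)=l$, of $s(i)=a_i$ via the substitution $k=i_j-1-n$, and of the length bounds all check out. One trivial slip: the final separator occupies pair index $t-1$ (bit positions $2t-2$ and $2t-1$), not $t-2$, but either index lies outside the range $j<t-2$ of condition (c), so the argument is unaffected.
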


One easily checks:

 \begin{claim}\label{seq poly}
 The predicate $Seq$ can be checked in polynomial time and the functions \,$s\to l(s)$\, and\, $(s,i)\to s(i)$, \,for $i<l(s)$, are computable in polynomial time.
 \end{claim}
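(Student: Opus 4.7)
The plan is to realise each of the three assertions by a single left-to-right scan of the binary expansion of $s$. Since on a multitape Turing machine the bits $(\varepsilon_i)_{i<|s|}$ of $s$ are literally what is written on the input tape, and since $|s|$ can be computed in linear time by Lemma~\ref{lemma2 p5}, the only question is whether conditions (a)--(c) of Definition~\ref{code string} and the extraction of $l(s)$ and $s(i)$ can be done with essentially constant work per bit pair.

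For the predicate $Seq$ I would first reject $s$ if $|s|$ is odd or $|s|\le 2$; otherwise I set $2t=|s|$ and check clauses (a) and (b) by inspecting the four bits $\varepsilon_0,\varepsilon_1,\varepsilon_{2t-2},\varepsilon_{2t-1}$ in constant time. Clause (c) is a purely local condition relating two consecutive pairs, so I sweep the tape two bits at a time and, whenever the current pair equals $(0,1)$ with $j<t-2$, look at the next pair and check that its two entries coincide. This yields an $O(|s|)$-time decision procedure, and in particular polynomial time.

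The function $s\mapsto l(s)$ is obtained during the same sweep by maintaining a binary counter that is incremented each time a pair $(0,1)$ is read; the counter never exceeds $t\le|s|$, so each update costs $O(\log|s|)$ and the whole computation remains polynomial. To compute $s(i)$ for a given $i<l(s)$ I reuse the sweep to locate the indices $i_{i-1}$ and $i_i$ of the set $S(s)$ (or only $i_0$ if $i=0$), and then copy the even-indexed bits $\varepsilon_{2n}$ from the prescribed range onto an output tape.

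The only point that requires any care is the bit reversal implicit in the exponent $i_j-1-n$ of Definition~\ref{code string}: the lower index $n$ contributes the higher-order output bit, so I write the output from high-order down to low-order while reading the input in the opposite direction, using an auxiliary tape. I do not anticipate any genuine obstacle here: each subtask reduces to a constant number of passes over the input with constant work per bit (plus a logarithmic factor for counter arithmetic in the case of $l(s)$), comfortably meeting the polynomial-time bound announced in the claim.
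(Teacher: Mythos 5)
Your proof is correct; in fact the paper offers no argument at all for this claim (it is introduced with ``One easily checks''), so your scan-based verification of the local conditions (a)--(c), the separator count for $l(s)$, and the bit extraction with reversal for $s(i)$ is exactly the routine justification the paper is implicitly relying on. Your bound is even linear rather than merely polynomial, which is more than the claim requires.
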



\section{The tree associated with the computation of $\pmb A$}

This section is greatly inspired from Tourlakis' exposition~\cite[Section 2.4.4]{tou} of the fact that $graph(A)=\{(u,v,w):A(u,v)=w\}$ is primitive recursive (where \,$A$\, is defined with different initial conditions). In order to deal with the inverse $\alpha$ of the function $u\mapsto A(u,u)$, we shall consider in addition the predicate \,$A(u,v)<w$. To control the size of Tourlakis' type tree witnessing \,$A_k(n)<m$, it will be helpful to add new leaves.

\medskip
Let us first note that for $u,v\geq 1$, since \ $A_u(v)=A_{u-1}(A_u(v-1))$, one gets the following equivalences:
\begin{equation}
A_u(v)=w\ \Leftrightarrow\ \text{there exists } \,w'>0\ \text{ such that } \left\{\begin{array}{l}
A_u(v-1)=w' \ \text{ and }\\
A_{u-1}(w')=w
\end{array}\right.
\end{equation}

\begin{equation}
A_u(v)<w\ \Leftrightarrow\ \text{there exists }\,w'>0\ \text{ such that } \left\{\begin{array}{l}
A_u(v-1)=w'\ \text{ and }\\
A_{u-1}(w')<w
\end{array}\right.
\end{equation}

($w'>0$ because for any $x,y\in\n,\ A_x(y)\geq A_0(0)=1>0$)

\medskip
As in~\cite{tou}, one can thus unroll a labeled binary tree witnessing the fact that \,$A_k(n)<m$. We shall restrict to the case $k\geq 4$ and $n\geq 3$. For some unique sequence \,$(w_i)_i$, \,the labels of the nodes in the following tree are true statements \vspace*{4mm}

\centerline{\synttree[$A_k(n)<m$[$A_k(n-1)=w_0 $ [$A_k(n-2)=w_1 $[\esp[\esp[leaves]]] ][$A_{k-1}(w_1)=w_0 $ \branchheight{1in}[\esp[\esp[leaves]]]]][$A_{k-1}(w_0)<m$ [$A_{k-1}(w_0-1)=w_2$[\esp[\esp[leaves]]]][$A_{k-2}(w_2)<m$[\esp[\esp[leaves]]]] ]]}

\bigskip
Let us describe the structure of the tree and the labeling of nodes:
\begin{itemize}[topsep=0mm,itemsep=-.2ex]
\item the root is labeled $A_k(n)<m$.
\item A node which is not a leaf is labeled
\begin{itemize}[topsep=-1mm,itemsep=-.2ex]
\item either by \,$A_u(v)=w$ \,for $u,v\geq 1$, and admits for a unique\, $w'>0$, a left son labeled \\$A_u(v-1)=w'$\, and a right son labeled \,$A_{u-1}(w')=w$,
\item or by \,$A_u(v)<m$\, for $u\geq 4,\,v\geq 1$, and admits for a unique\, $w'>0$, a left son labeled \\$A_u(v-1)=w'$\, and a right son labeled \,$A_{u-1}(w')<m$.
\end{itemize}
\item A node is a leaf if it is labeled
\begin{itemize}[topsep=-1mm,itemsep=-.2ex]
\item either by \,$A_0(v)=2^v$,
\item or by \,$A_u(0)=1$,
\item or by \,$A_u(v)<m$\, with \,$u\leq 3$\, or \,$v=0$.
\end{itemize}
\end{itemize}

\begin{claim}\label{claim p17} Let us consider a binary tree  witnessing \,$A_k(n)<m$, \,for \,$k\geq 4,\ n\geq 3$.
\begin{enumerate}[label=(\alph*),topsep=-1mm,itemsep=-.2ex]
\item
\begin{enumerate}[label=\arabic*.,leftmargin=*,topsep=1mm,itemsep=-.2ex]
\item If a node in the tree is labeled $A_u(v)=w$, then \,$u,v,w<\,log^{(4)}(m)$.
\item If it is labeled \,$A_u(v)<m$, then \,$3\leq u,v<\,log^{(4)}(m)$.
\end{enumerate}
\item If a node labeled \,$A_u(v)=w$\, or \,$A_u(v)<m$\, admits a son labeled \,$A_{u'}(v')=w'$ \,or \,$A_{u'}(v')<m$, \,then \ $(u',v')<_{lex}(u,v)$.
\end{enumerate}
\end{claim}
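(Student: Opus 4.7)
The plan is to prove part (b) first, since it is immediate from the generation rules, and then establish part (a) by a top-down structural induction on the depth of the node in the tree. For (b), the construction shows that any non-leaf node with first two coordinates $(u,v)$, regardless of label type, has its left son at $(u,v-1)$ and its right son at $(u-1,w')$; both pairs are $<_{lex}(u,v)$ since $v-1<v$ and $u-1<u$.

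For (a), I will strengthen the invariant so that, in addition to the stated bounds, every $A_u(v)<m$ node also satisfies the lower bound $v\geq 3$ (not merely $v\geq 1$). This strengthening is needed because Claim~\ref{claim p4} requires its argument to be at least $3$. At the root $A_k(n)<m$ the hypotheses $k\geq 4\geq 3$ and $n\geq 3$ supply the lower bounds. The upper bounds come from Claims~\ref{claim p4} and~\ref{claim1 p5} combined with the elementary fact that $exp^{(4)}(log^{(4)}(m))\geq m$ (iterating $exp\circ log\geq \mathrm{id}$ four times): if $n\geq log^{(4)}(m)$ then $A_k(n)\geq A_3(n)>exp^{(4)}(n)\geq m$, contradicting $A_k(n)<m$, and symmetrically for $k$ via Claim~\ref{claim1 p5}.

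For the inductive step, consider a non-leaf parent. If it is labeled $A_u(v)=w$ with $u,v\geq 1$ and $u,v,w<log^{(4)}(m)$, the bounds on $u$, $v-1$, $u-1$ are immediate, and for $w'$ one uses that the right son's label is the true equation $A_{u-1}(w')=w$ together with the observation $A_{u-1}(y)>y$ for every $y\geq 0$ (since $A_{u-1}(0)=1$ and, for $y\geq 1$, $A_{u-1}(y)\geq A_0(y)=2^y>y$), hence $w'<w<log^{(4)}(m)$. If the parent is labeled $A_u(v)<m$ with $u\geq 4$, $v\geq 3$, $u,v<log^{(4)}(m)$, then $u-1\geq 3$ and, by Fact~\ref{fact p3}(2), $w'=A_u(v-1)\geq A_u(2)=4\geq 3$, which supplies the strengthened lower bounds on the right son; applying Claim~\ref{claim p4} to the valid statement $A_{u-1}(w')<m$ (now satisfying both $u-1\geq 3$ and $w'\geq 3$) yields $exp^{(4)}(w')<A_{u-1}(w')<m$, hence $w'<log^{(4)}(m)$. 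This single bound on $w'$ simultaneously handles the left son's $w'$-coordinate and the right son's $v$-coordinate.

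The main subtlety will be propagating the strengthened lower bound $v\geq 3$ down the right-son chain of $<m$-nodes, because Claim~\ref{claim p4} genuinely fails when its argument is below $3$; the closure of this issue rests on the one-line computation $A_u(2)=4$ from Fact~\ref{fact p3}(2), which guarantees that the new second coordinate is at least $4$ as soon as the previous one is at least $3$.
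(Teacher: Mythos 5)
Your proposal is correct and follows essentially the same route as the paper: induction on the depth of the node, with the lower bound $\geq 3$ propagated along the $<m$-chain via $A_u(2)=4$ and the upper bound on the new $w'$ obtained by applying Claim~\ref{claim p4} to the true statement $A_{u-1}(w')<m$. The only (harmless) cosmetic differences are that you phrase the root bounds contrapositively via $exp^{(4)}(log^{(4)}(m))\geq m$ and bound $w'$ in the $=$-case through $A_{u-1}(w')=w>w'$, where the paper uses $w'=A_u(v-1)<A_u(v)=w$.
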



\begin{proof}  \ (a) \,We argue by induction on the level of the node:\smallskip

\textbf{Level\, 0: } The root is labeled \,$A_k(n)<m$. By hypothesis, $k,n\geq 3$. Hence
\begin{align*}exp^{(4)}(n)&\leq A_3(n)\leq A_k(n)<m\espa\text{(by Claim~\ref{claim p4})}\\
exp^{(4)}(k)&\leq A_k(3)\leq A_k(n)<m\espa\text{(by Claim~\ref{claim1 p5})}
\end{align*}
Therefore both \,$k,n<\,log^{(4)}(m)$ \,and (a)\,2. holds.\smallskip

\textbf{Level\, 1: } there are two nodes of level 1: the left node is labeled \,$A_k(n-1)=w_0$\, and the right node \,$A_{k-1}(w_0)<m$.\nl
$k\geq 4$\, implies \,$k-1\geq 3$. \ Also\,
$A_k(n-1)\geq A_0(2)=2^2$ \,implies \,$w_0\geq 4\geq 3$.\nl
It remains to check \,$w_0<\,log^{(4)}(m)$. \,By Claim~\ref{claim p4}, \,$k-1\,\geq\, 3$ \,and\, $w_0\,\geq \,3$\, imply\nl
\centerline{$exp^{(4)}(w_0)\leq A_3(w_0)\leq A_{k-1}(w_0)<m$.}\nl
Hence (a)\,1. and (a)\,2. hold at level 1.\smallskip

\textbf{level \,$\pmb{r+1}$ \,with\, $\pmb{r\geq 1}$:} \ We assume the properties hold for the nodes at level $r$ and we check that it is also true for their sons.
\begin{itemize}[topsep=1mm,itemsep=-.2ex]
\item Let thus the node of level $r$ be labeled \,$A_u(v)=w$\, with \,$u,v,w<\,log^{(4)}(m)$.
\begin{itemize}[topsep=1mm,itemsep=-.2ex]
\item Its left son is labeled \,$A_u(v-1)=w'$,
\item its right son is labeled \,$A_{u-1}(w')=w$.
\end{itemize}
Since \,$w'=A_u(v-1)<A_u(v)=w$, \ (a)\,1. holds for both sons.
\item Let now the node of level \,$r$\, be labeled \,$A_u(v)<m$\, with \,$3\leq u,v<\,log^{(4)}(m)$. Since it is not a leaf, $u\geq 4$.
\begin{itemize}[topsep=-1mm,itemsep=-.2ex]
\item Its left son is labeled \,$A_u(v-1)=w'$,
\item its right son is labeled \,$A_{u-1}(w')<m$.
\end{itemize}
As for level 1, $v-1\geq 2$\, implies \,$w'=A_u(v-1)\geq A_0(2)\geq 3$. Also $u-1\geq 3$ and \,$A_{u-1}(w')<m$ \,give \,$w'<\,log^{(4)}(m)$. We thus deduce that (a)\,1. holds for the left son and (a)\,2. for the right one.
\end{itemize}\medskip

(b) \,Keeping the notation of the claim, we simply note  $(u',v')=
\begin{cases}
(u-1,w') \text{\ \ or}\\
(u,v-1).
\end{cases}$

Hence \,$(u',v')<_{lex}(u,v)$.
\end{proof}

\begin{remark}~\label{leaf4}
\begin{itemize}[topsep=-1mm,itemsep=-.2ex]
\item (a)\,2.  implies that the last type of leaf labeled \ $A_u(v)<m$ \,with \,$u\leq 3$\, or \,$v=0$, is necessarily of the form  \,$A_3(v)<m$\, for $v\geq 3$.
\item We also deduce from this claim that the binary tree witnessing \,$A_k(n)<m$, \,for $k\geq 4,$ $\,n\geq 3$, \,has height at most $\,(log^{(4)}(m))^2$.
\end{itemize}
 \end{remark}

We do not know whether different nodes in the tree may have the same label. This made the exposition a bit more tedious. We now focus on labels  (which are true statements) occuring in the binary tree witnessing \,$A_k(n)<m$ \,and encode this set.\nl
let us first note that if \,$A_u(v)=w$, \,then \,$w\geq A_0(0)=1$. Hence we shall represent the label \,$A_u(v)=w$\, by the integer \,$\langle u,v,w\rangle$\, and the label \,$A_u(v)<m$ \,by the integer \,$\langle u,v,0\rangle$ \,(this will reduce the size of the encoding). We  identify the label with its code.
Let us introduce the following notation:
\begin{definition}\label{lex}
If $x'=\langle u',v',w'\rangle$\, and \,$ x=\langle u,v,w\rangle$, then \, \,\nl
\centerline{$x'<_{lex}^3 x$ \ssi \,$(u',v',w')<_{lex}(u,v,w)$.}
\end{definition}

By Claim~\ref{claim p17}\,(b), if $\langle u',v',w'\rangle$\,  labels  the son of a node labeled  \,$\langle u,v,w\rangle$, \,then \,$\langle u',v',w'\rangle\,<_{lex}^3\,\langle u,v,w\rangle$. \,This motivates the following:

\begin{claim}\label{claim p20}
Let \,$\pmb{a}=(a_i)_{i<l}$\, enumerate, according to increasing $<_{lex}^3$ order, \,all labels occuring in the tree witnessing \,$A_k(n)<m$, \,for \,$k\geq 4,\,n\geq 3$. Then
\begin{enumerate}[label=(\alph*),topsep=-1mm,itemsep=-.2ex]
\item \,$a_{l-1}=\langle k,n,0\rangle$ \,and for any \,$i<l$,
\begin{itemize}[topsep=-1mm,itemsep=-.2ex]
\item either ($a_i$ labels a leaf) \,$a_i=\begin{cases}
\langle 0,v,2^v\rangle \text{\ \ \  or}\\
\langle v,0,1\rangle \text{\ \ \ \ or}\\
\langle 3,v,0\rangle,
\end{cases}$
\item or \,$a_i=\langle u,v,w\rangle$ \,and there exist \,$j,j'<i,\ w'>0$ \,such that\\ \,$a_j=\langle u,v-1,w'\rangle$ \,and\, $a_{j'}=\langle u-1,w',w\rangle$.
\end{itemize}
\item \,$2\leq l\leq (log^{(4)}(m))^3$ \,and for each $i<l,\ \ a_i<6^4(log^{(4)}(m))^4$.
\end{enumerate}
\end{claim}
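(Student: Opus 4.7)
The plan is to assemble the claim from two earlier pieces: the coordinate bounds of Claim~\ref{claim p17} and the coding size estimate of Claim~\ref{p13 claim2}.

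For part~(a), I would first argue that $\langle k,n,0\rangle$ is $<_{lex}^3$-maximal among the labels occurring in the tree, and hence equals $a_{l-1}$. The root is labeled $A_k(n)<m$ with code $\langle k,n,0\rangle$, and Claim~\ref{claim p17}(b) guarantees that along every edge the first two coordinates of the label drop lexicographically. Iterating, every non-root node carries a label $(u',v')<_{lex}(k,n)$, whose code is therefore $<_{lex}^3\langle k,n,0\rangle$. Next I would classify each $a_i$ by whether the corresponding tree node is a leaf. The three leaf types in the tree definition translate directly into the three encodings: $A_0(v)=2^v$ becomes $\langle 0,v,2^v\rangle$, $A_u(0)=1$ becomes $\langle u,0,1\rangle$, and the case ``$A_u(v)<m$ with $u\leq 3$ or $v=0$'' collapses via Remark~\ref{leaf4} together with Claim~\ref{claim p17}(a)\,2. to $u=3$ and $v\geq 3$, giving $\langle 3,v,0\rangle$. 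For a non-leaf node labeled $a_i=\langle u,v,w\rangle$, the tree definition provides a left child $\langle u,v-1,w'\rangle$ and a right child $\langle u-1,w',w\rangle$, where $w'>0$ is the unique intermediate witness. Crucially, this uniformly handles both the equality case ($w\geq 1$) and the inequality case ($w=0$), since the right child of an inequality node is again an inequality $A_{u-1}(w')<m$ and hence encodes as $\langle u-1,w',0\rangle$. Claim~\ref{claim p17}(b) forces both child codes to be $<_{lex}^3 a_i$, so they appear at positions $j,j'<i$.

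For part~(b), the lower bound $l\geq 2$ is immediate: the root $\langle k,n,0\rangle$ and its left child $\langle k,n-1,w_0\rangle$ are distinct labels. For the upper bound on $l$, Claim~\ref{claim p17}(a) shows that every coordinate of any label is an integer strictly less than $log^{(4)}(m)$ (with $w=0$ in the inequality case), so the labels inject into $\{0,\ldots,log^{(4)}(m)-1\}^3$, yielding $l\leq(log^{(4)}(m))^3$. Finally, to bound each $a_i$, I would plug these coordinate bounds into Claim~\ref{p13 claim2}: from $a_i=\langle u,v,w\rangle\leq 8(u+v+w)^4$ and $u+v+w<3\,log^{(4)}(m)$ one gets $a_i<8\cdot 81\,(log^{(4)}(m))^4=648\,(log^{(4)}(m))^4<6^4(log^{(4)}(m))^4$.

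The only mildly delicate point is the uniform child-decomposition in part~(a): one must recognize that the convention of writing $\langle u,v,0\rangle$ for the label ``$A_u(v)<m$'' lets the right-child inequality fit the same template as the equality case, so a single clause covers both types of internal node. Everything else is a direct assembly of Claims~\ref{claim p17} and~\ref{p13 claim2} with Remark~\ref{leaf4}.
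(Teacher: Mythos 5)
Your proposal is correct and follows essentially the same route as the paper, which simply cites the tree's definition, Claim~\ref{claim p17} and Remark~\ref{leaf4} for (a) and combines Claim~\ref{claim p17}(a) with Claim~\ref{p13 claim2} for (b), arriving at the same numerical bound $2^3 3^4=648<6^4$. You merely spell out the details (maximality of $\langle k,n,0\rangle$, the uniform child template via the $w=0$ convention, the injection into $\{0,\ldots,log^{(4)}(m)-1\}^3$) that the paper leaves implicit.
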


\begin{proof} (a) holds by definition of the labeled tree,  Claim~\ref{claim p17}\,(b) and Remark~\ref{leaf4}.\nl
(b) By Claim~\ref{claim p17}\,(a), if $\langle u,v,w\rangle$ \,labels a node, then \,$u,v,w<\,log^{(4)}(m)$. Hence \,$l\leq(log^{(4)}(m))^3$. \
By Claim~\ref{p13 claim2}, \,$\langle u,v,w\rangle\leq 8(u+v+w)^4$. \ Hence \nl \espa\espa $\langle u,v,w\rangle\,<\,2^33^4(log^{(4)}(m))^4\,\leq\,6^4(log^{(4)}(m))^4$.
\end{proof}\medskip

To reduce the time of computation, instead of checking for several $v$'s whether \,$A_3(v)<m$ \,(to recognize a leaf), we shall compute once \,$r_m=Inv_{A_3}(m)$ \,and then check \,$v<r_m$, \,for the different $v$'s. We thus set:

\begin{definition}\label{def comput}
Let us consider the predicate \,$Comput_<$\, defined as follows: for \,$s,k,n,r\in\n$, \,
\begin{equation*}
\begin{split}
Comput_< (s,k,n,r) \ \ \ \text{iff}\ \ \ & Seq(s)\land s(l(s)-1)=\langle k,n,0\rangle\land\ \forall\,i<l(s) \\
&\Big[\exists v\big(s(i)=\langle 0,v,2^v\rangle\,\lor\,(s(i)=\langle v,0,1\rangle\,\lor\,\\[-2pt]
& \quad \makebox[4.3cm]{}(s(i)=\langle 3,v,0\rangle\,\land\,v<r)\big)\Big]\,\lor\\
&\Big[\exists u,v,w\ \exists\,w'>0\ \exists j,j'<i\ \big(\,s(i)=\langle u,v,w\rangle\ \land\ \\[-2pt]
& \quad \makebox[1.7cm]{} s(j)=\langle
 u,v-1,w'\rangle\,\land\,s(j')=\langle u-1,w',w\rangle\big)\Big].
\end{split}
\end{equation*}
\end{definition}

We obtain:

\begin{claim}\label{claim p21}
There exists \,$C\in\n$\, such that for all \,$k\geq 4,\,n\geq 3,\,m\geq 0$\, if \,$A_k(n)<m$\, and \,$r_m=Inv_{A_3}(m)$, then there is \,$s\leq C\,log^{(2)}(m)$ \,such that \,$Comput_<(s,k,n,r_m)$ \,holds.
\end{claim}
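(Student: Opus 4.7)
The plan is to construct $s$ directly from the labeled binary tree witnessing $A_k(n) < m$ (whose existence is established in the discussion preceding Claim~\ref{claim p17}). I would collect the set of labels appearing in that tree, enumerate them in strictly increasing $<_{lex}^3$ order as $\mathbf{a} = (a_i)_{i<l}$---this is precisely the sequence discussed in Claim~\ref{claim p20}---and apply Fact~\ref{size_sequence} to encode $\mathbf{a}$ into an integer $s$ with $Seq(s)$, $l(s) = l$, and $s(i) = a_i$ for all $i < l$.

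Next, I would verify each clause of $Comput_<(s, k, n, r_m)$. The clause $Seq(s)$ is immediate from the encoding. For $s(l(s)-1) = \langle k, n, 0\rangle$: the root label $\langle k, n, 0\rangle$ is the $<_{lex}^3$-maximum among all node labels, since Claim~\ref{claim p17}(b) forces every non-root label $\langle u, v, w\rangle$ to satisfy $(u,v) <_{lex} (k,n)$, and hence $\langle u, v, w\rangle <_{lex}^3 \langle k, n, 0\rangle$ regardless of $w$. For each $i < l$, Claim~\ref{claim p20}(a) gives one of two cases: either $a_i$ is a leaf label in one of the three allowed forms (and in the form $\langle 3, v, 0\rangle$ the truth of $A_3(v) < m$ together with strict monotonicity of $A_3$ gives $v < Inv_{A_3}(m) = r_m$, matching the clause $v < r$), or $a_i = \langle u, v, w\rangle$ is a non-leaf label whose two children labels $\langle u, v-1, w'\rangle$ and $\langle u-1, w', w\rangle$ (with $w' = A_u(v-1) \geq 1 > 0$) occur at positions $j, j' < i$ thanks to Claim~\ref{claim p17}(b).

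For the quantitative bound, I would combine Claim~\ref{claim p20}(b) with Fact~\ref{size_sequence}. Taking $\mu = 6^4(log^{(4)}(m))^4$, we have $l \leq (log^{(4)}(m))^3$ and $a_i \leq \mu$ for all $i$, so
\[
|s| \;\leq\; 2l(|\mu|+1) \;\leq\; C_1\,(log^{(4)}(m))^3 \, log^{(5)}(m)
\]
for a suitable constant $C_1$. Writing $y = log^{(3)}(m)$ (so that $log^{(4)}(m) = log\,y$ and $log^{(5)}(m) = log\,log\,y$), the right-hand side is $o(y)$, hence for an appropriate constant $C$ we have $|s| \leq y + \log_2 C$, whence
\[
s \;\leq\; 2^{|s|} \;\leq\; C \cdot 2^{y} \;=\; C\,log^{(2)}(m),
\]
and enlarging $C$ if necessary absorbs the range of $m$ where $y$ is small.

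The most delicate point is the non-leaf clause of $Comput_<$: one must handle uniformly the two possible parent types (whether $a_i$ encodes $A_u(v) = w$ as $\langle u, v, w\rangle$ or $A_u(v) < m$ as $\langle u, v, 0\rangle$), verify that in both cases the children labels take exactly the stated triple form, and check $w' > 0$. The asymptotic size estimate and the root identification are essentially routine given the earlier claims.
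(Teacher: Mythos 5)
Your proposal is correct and follows essentially the same route as the paper's proof: encode the $<_{lex}^3$-ordered label sequence of Claim~\ref{claim p20} via Fact~\ref{size_sequence}, verify the clauses of $Comput_<$ directly from Claim~\ref{claim p20}(a) (including the $v<r_m$ check for leaves of type $\langle 3,v,0\rangle$), and bound $s$ by exponentiating $|s|\leq 2l(|\mu|+1)$. The only nitpick is the final step, where $2^{log^{(3)}(m)}$ equals $log^{(2)}(m)$ only up to a factor of $2$ because $log$ is a ceiling; your constant $C$ absorbs this anyway.
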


\begin{proof}
Let $\pmb{a}=(a_i)_{i<l}$\, be the sequence of Claim~\ref{claim p20} enumerating the different labels occuring in the tree witnessing \,$A_k(n)<m$. By (b) of this claim, if $\mu=(6\,log^{(4)}(m))^4$, then for any $i<l$, $a_i<\mu$.
By Fact~\ref{size_sequence}, let \,$s\in\n$\, encode \,$\pmb{a}$\, and satisfy \ $|s|\leq 2l(|\mu|+1)$.

\medskip
- We first note that \,$Comput(s,k,n,r_m)$ \,holds: if for some \,$i<l,\ s(i)=a_i=\langle 3,v,0\rangle$, \,then this implies \,$A_3(v)<m$\,
and hence \,$v<r_m$. \medskip \\
 \indent - It remains to bound \,$s$. By Claim~\ref{claim p20}\,(b), one has \,$l\leq(log^{(4)}(m))^3$. Since
    $|s|\leq 2l(|\mu|+1)$ \,for \,$\mu=(6\,log^{(4)}(m))^4$, applying \ $s<2^{|s|}$\, and \,$2^{|\mu|}\leq 2\mu$, we obtain
      $s\leq 2^{2l(|\mu|+1)}\leq (2\mu)^{2l}2^{2l}$.

\medskip\noindent Hence \ $s\leq (6log^{(4)}(m))^{8(log^{(4)}(m))^3}\cdot 2^{2(log^{(4)}(m))^3}$.
 There exists \,$K,\,K'$\, (independent of $m$) such that
 \begin{equation*}
 \begin{split}
(6log^{(4)}(m))^{8(log^{(4)}(m))^3}\cdot 2^{2(log^{(4)}(m))^3}& \leq\ K2^{(log^{(4)}(m))^4}\\& \leq\ KK'2^{log^{(3)}(m)}\\& \leq\ 2KK'log^{(2)}(m).
\end{split}
\end{equation*}
(We use the fact that if $f(m)\leq g(m)$ almost everywhere, then there is $\theta$ such that $f(m)\leq g(m)+\theta$ for all $m$, and hence \ $2^{f(m)}\leq 2^\theta 2^{g(m)}$ for all $m$).

Therefore \,$s\leq 2KK'log^{(2)}(m)$.
\end{proof}

Conversely, one obtains:

\begin{claim}\label{claim p23}
Let \,$m\in\n$\, and \,$r_m=Inv_{A_3}(m)$. \,If \,$Comput_<(s,k,n,r_m)$ \,holds for some \,$s,k,n\in\n$, \,then \,$A_k(n)<m$.
\end{claim}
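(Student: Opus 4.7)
The plan is to prove, by strong induction on $i<l(s)$, that the $i$th label in the sequence encoded by $s$ corresponds to a true arithmetical statement, under the convention that $\langle u,v,w\rangle$ with $w>0$ encodes the equality $A_u(v)=w$ while $\langle u,v,0\rangle$ encodes the strict inequality $A_u(v)<m$. Once this is established, applying the induction to $i=l(s)-1$ yields the claim, since the hypothesis $Comput_<(s,k,n,r_m)$ forces $s(l(s)-1)=\langle k,n,0\rangle$.

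I proceed by case analysis on the clauses of Definition~\ref{def comput} that are satisfied by $s(i)$. The three leaf clauses are handled immediately. If $s(i)=\langle 0,v,2^v\rangle$, then $A_0(v)=2^v$ by the defining equation for $A_0$. If $s(i)=\langle v,0,1\rangle$, then $A_v(0)=1$ by definition of $A$. If $s(i)=\langle 3,v,0\rangle$ with $v<r_m$, then by definition of $r_m=Inv_{A_3}(m)$ one has $A_3(r_m-1)<m$, and combining this with the strict monotonicity of $A_3$ (Lemma~\ref{property_Ak}(a)) gives $A_3(v)\leq A_3(r_m-1)<m$.

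The inductive clause is the main step. Suppose $s(i)=\langle u,v,w\rangle$ and there exist indices $j,j'<i$ and $w'>0$ with $s(j)=\langle u,v-1,w'\rangle$ and $s(j')=\langle u-1,w',w\rangle$. The induction hypothesis applied at $j$, together with the positivity of $w'$, forces the interpretation $A_u(v-1)=w'$. The induction hypothesis applied at $j'$ yields either $A_{u-1}(w')=w$ (when $w>0$) or $A_{u-1}(w')<m$ (when $w=0$). The Ackermann recurrence $A_u(v)=A_{u-1}(A_u(v-1))$ from Definition~\ref{def-A} then substitutes $w'$ to give respectively $A_u(v)=w$ or $A_u(v)<m$, which is exactly the statement encoded by $s(i)$.

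The only subtlety is tracking the two distinct meanings of the third coordinate, since a zero there switches the interpretation from equality to strict inequality; this dichotomy must be carried through each clause and through the inductive step. Specialising the induction to $i=l(s)-1$, where $s(i)=\langle k,n,0\rangle$, delivers $A_k(n)<m$, which is the desired conclusion.
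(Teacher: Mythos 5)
Your proof is correct and follows essentially the same route as the paper: induction on $i<l(s)$, with the dichotomy that a third coordinate $w>0$ encodes $A_u(v)=w$ and $w=0$ encodes $A_u(v)<m$, the three leaf clauses handled directly (using the minimality of $r_m=Inv_{A_3}(m)$ for the $\langle 3,v,0\rangle$ case), and the recurrence $A_u(v)=A_{u-1}(A_u(v-1))$ for the inductive clause, noting that $w'>0$ is what forces the equality reading at index $j$. Nothing is missing.
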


\begin{proof}
 We assume \,$Comput_<(s,k,n,r_m)$\, is satisfied and we check by induction on \,$i<l(s)$\, that
\begin{enumerate}[label=(\alph*),topsep=-1mm,itemsep=-.2ex]
\item if \,$s(i)=\langle u,v,w\rangle$\, with \,$w>0$, \,then \,$A_u(v)=w$.
\eject
\item if \,$s(i)=\langle u,v,0\rangle$, \,then \,$A_u(v)<m$.
\end{enumerate}

\smallskip
Let us note that by definition, $\,l(s)\geq 2$.

\medskip
- Let $i=0$. \,By definition of \,$Comput_<$, \,$s(i)$\, is necessarily of a ``leaf type". That is
\begin{itemize}[topsep=-.8mm,itemsep=-.3ex]
\item either \,$s(i)=\langle 0,v,2^v\rangle$ \,or\, $\langle u,0,1\rangle$, and by definition of the function $A$, (a) is satisfied,
\item or \,$s(i)=\langle 3,v,0\rangle\,$ and \,$Comput(s,k,n,r_m)$\, implies \,$v<r_m$. \,Therefore one has \,$A_3(v)<m$\, and (b) holds.
\end{itemize}

\medskip
- Let now $i>0$. We assume that for any \,$j<i$, according to the nature of \,$s(j)$, (a) or (b) holds for $s(j)$.

If \,$s(i)$\, is of the ``leaf type", then one argues as for $i=0$. Otherwise \,$s(i)=\langle u,v,w\rangle$\, and there exist \,$j,\,j'<i$\, and \,$w'>0$\, such that we have \,$s(j)=\langle u,v-1,w'\rangle$ \,and\, $s(j')=\langle u-1,w',w\rangle$.\\
By induction hypothesis,
\medskip \\
 $\begin{array}[t]{l}
{\scriptstyle{\bullet}} \ \ A_u(v-1)=w' \text{ \ (the fact that }w'>0\text{ is important)}\\
{\scriptstyle{\bullet}} \ \ \text{and }\begin{cases}
\text{ if }w=0,\ \,A_{u-1}(w')<m,\\
\text{ if }w>0,\ \,A_{u-1}(w')=w.
\end{cases}
\end{array}$ \medskip \nl
We thus deduce \ $A_u(v)=A_{u-1}(A_u(v-1))=A_{u-1}(w')$. Hence according to whether \,$w=0$\, or not, we conclude that (a) or (b) holds for \,$s(i)=\langle u,v,w\rangle$.

\medskip
Hence by (b) applied to $i=l(s-1)$ and $s(l(s)-1)=\langle k,n,0\rangle$, we derive \ $A_k(n)<m$.
\end{proof}

Combining Claims~\ref{claim p21} and \ref{claim p23}, we obtain:

\begin{lemma}\label{lemma alpha}
There is \,$C\geq 1$ \,such that for any \,$k\geq 4,\,n\geq 3,\ m\in\n$, \,if \,$r_m=Inv_{A_3}(m)$, \,then the following equivalence holds:\nl
\centerline{$A_k(n)<m\ssi\exists s\leq C\,log^{(2)}(m)\ \,Comput_<(s,k,n,r_m)$.}
\end{lemma}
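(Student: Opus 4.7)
The plan is simply to combine Claims~\ref{claim p21} and~\ref{claim p23}, which together supply the two implications of the stated equivalence.

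For the forward direction ($\Rightarrow$), I would take $C$ to be (a possibly enlarged version of) the constant provided by Claim~\ref{claim p21}; enlarging it to ensure $C \geq 1$ is harmless because that only relaxes the upper bound on $s$. Then whenever $k\geq 4$, $n\geq 3$ and $A_k(n)<m$ hold, Claim~\ref{claim p21} directly produces some $s\leq C\,log^{(2)}(m)$ such that $Comput_<(s,k,n,r_m)$ is satisfied. The parameter $r_m = Inv_{A_3}(m)$ matches the definition used in both the lemma and the claim, so no reindexing is needed.

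For the backward direction ($\Leftarrow$), I observe that Claim~\ref{claim p23} establishes a stronger statement: \emph{any} $s\in\n$ for which $Comput_<(s,k,n,r_m)$ holds forces $A_k(n)<m$, with no bound on $s$ required. Hence, from the existential hypothesis $\exists s\leq C\,log^{(2)}(m)\ Comput_<(s,k,n,r_m)$ one picks such an $s$ and applies Claim~\ref{claim p23} to conclude $A_k(n)<m$.

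The only subtlety worth flagging is that $C$ must be uniform --- independent of $k$, $n$, $m$. Claim~\ref{claim p21} already produces such a uniform constant (the $2KK'$ appearing in its proof depends on neither $k$ nor $n$ nor $m$), so this is automatic. There is no real obstacle here: the lemma is essentially a notational repackaging of the two preceding claims, and the substantive work has already been carried out in proving them --- constructing the encoding $s$ from a witnessing tree with the right size bound in one direction, and verifying by induction on $i<l(s)$ that the syntactic conditions of $Comput_<$ propagate into true equalities or inequalities about $A$ in the other.
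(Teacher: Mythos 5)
Your proposal is correct and matches the paper exactly: the paper derives the lemma by simply combining Claims~\ref{claim p21} and~\ref{claim p23}, with the forward direction and the constant $C$ coming from Claim~\ref{claim p21} and the backward direction from the unbounded statement of Claim~\ref{claim p23}. Nothing further is needed.
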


\section{Computation time}

We first estimate the complexity of \,$Comput_<$:

\begin{claim}\label{claim p25}
There exist \,$B,t\in\n$\, such that the predicate \,``$k\geq 4\,\land\,n\geq 3\ \land\,$\linebreak $Comput_<(s,k,n,r)$" \,can be checked in at most \,$B(max(|s|,|k|,|n|,|r|))^t$ \,steps.
\end{claim}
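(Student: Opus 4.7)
The plan is to read off a polynomial-time checking algorithm directly from the definition of $Comput_<$, relying on Claims~\ref{p13 claim1} and~\ref{seq poly} for the elementary subroutines. Write $N = \max(|s|,|k|,|n|,|r|)$; I will argue that each of the finitely many conjuncts and quantifier blocks in the definition can be tested in time polynomial in $N$.

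First I would dispose of the easy parts. Checking $k\geq 4$ and $n\geq 3$ is linear in $|k|+|n|$. Checking $Seq(s)$, computing $l(s)$, and extracting each $s(i)$ for $i<l(s)$ are polynomial in $|s|$ by Claim~\ref{seq poly}; note in particular that $l(s)\leq |s|/2$ so there are at most $O(|s|)$ indices to iterate over. The final equality $s(l(s)-1)=\langle k,n,0\rangle$ reduces to computing the coding $\langle k,n,0\rangle$, which is polynomial in $|k|+|n|$ by Claim~\ref{p13 claim1}.

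Next, for the main universal quantifier $\forall i<l(s)$, I would loop over the $O(|s|)$ values of $i$ and, for each, extract $s(i)$ and its three ``components'' via $(\cdot)_0,(\cdot)_1$ of Claim~\ref{p13 claim1}, obtaining candidate $(u,v,w)\in\n^3$ with $u,v,w\leq s(i)\leq s$. Each leaf-type disjunct is then easy: $\langle v,0,1\rangle$ and $\langle 3,v,0\rangle\land v<r$ involve only equality tests and the comparison $v<r$, all polynomial in $N$. For the disjunct $s(i)=\langle 0,v,2^v\rangle$, the only non-trivial thing is computing $2^v$; but here the crucial observation is that the presence of $2^v$ as the third component forces $2^v\leq s(i)\leq s$, hence $v\leq|s|\leq N$, so $2^v$ has length at most $N+1$ and is computable in time $O(N)$. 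This is the point I expect most readers to pause at, but the size constraint is what makes it innocuous.

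Finally, the ``internal node'' disjunct asks for $w'>0$ and $j,j'<i$ with $s(j)=\langle u,v-1,w'\rangle$ and $s(j')=\langle u-1,w',w\rangle$. Having extracted $(u,v,w)$ from $s(i)$, I would iterate over the $O(|s|)$ candidate indices $j<i$, extract $s(j)=\langle u'',v'',w''\rangle$, and test whether $u''=u$ and $v''=v-1$; if so, set $w'=w''$, check $w'>0$, form $\langle u-1,w',w\rangle$ using the coding subroutine, and scan the $O(|s|)$ indices $j'<i$ for equality. Each inner step is polynomial in $N$ (using $u,v,w,u'',v'',w''\leq s$, so all code lengths are $O(|s|)$), and the double loop contributes an $O(|s|^2)$ overhead. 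Summing over $i$, the overall checking algorithm runs in time polynomial in $N$, which provides the desired constants $B$ and $t$.
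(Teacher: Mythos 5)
Your proposal is correct and follows the same route as the paper, which simply asserts the claim as a consequence of Claim~\ref{seq poly}, Definition~\ref{def comput} and the bound $l(s)\leq|s|$; you have merely written out the details the paper leaves implicit. Your observation that $2^v\leq s(i)\leq s$ forces $v\leq|s|$, so that computing $2^v$ stays polynomial, is exactly the right point to make explicit and is the only spot where the paper's terse proof could have hidden a difficulty.
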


\begin{proof} \ This is a consequence of Claim~\ref{seq poly} about the complexity of $Seq$, the definition of $Comput_<$ (Definition~\ref{def comput}) and the fact that \,$l(s)\leq |s|$ \,(see Fact~\ref{size_sequence}).\end{proof}\medskip

Our goal is now to obtain:

\begin{lemma}\label{lemma beta}
There is \,$D\in\n$\, such that the predicate \,``$k\geq 4\,\land\,n\geq 3\,\land\,A_k(n)<m$" \,can be checked in at most \,$D\,\max(|k|,|n|,|m|)$\, steps.
\end{lemma}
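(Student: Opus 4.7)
The strategy is to turn Lemma~\ref{lemma alpha} into a decision algorithm: compute $r_m = Inv_{A_3}(m)$, then enumerate all candidate certificates $s \leq C\,log^{(2)}(m)$ and test $Comput_<(s,k,n,r_m)$ for each using the polynomial-time bound of Claim~\ref{claim p25}. The crux is to ensure that every argument to each call of $Comput_<$ is much smaller than $|m|$, so that the total cost stays linear.

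First I would verify $k \geq 4$ and $n \geq 3$ in $O(|k|+|n|)$ steps, and then compute $log^{(4)}(m)$ in $O(|m|)$ steps by iterating Lemma~\ref{lemma2 p5} four times. By Claims~\ref{claim p4},~\ref{claim1 p5}, and Lemma~\ref{property_Ak}(b), if either $k \geq log^{(4)}(m)$ or $n \geq log^{(4)}(m)$ then $A_k(n) \geq exp^{(4)}(log^{(4)}(m)) \geq m$, so I can reject on the spot. After this filter, $|k|$ and $|n|$ are both $O(log^{(5)}(m))$ and hence negligible compared to $|m|$; the small boundary cases in which some $log^{(i)}(m)$ is undefined are handled by a constant-size lookup, since $A_4(3)$ is already an explicit huge constant below which the predicate is automatically false.

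Next I would compute $r_m = Inv_{A_3}(m)$ in $O(|m|)$ steps via Lemma~\ref{time invk}. Since $A_3(v) > exp^{(4)}(v)$ for $v \geq 3$, we have $r_m \leq log^{(4)}(m) + O(1)$ and hence $|r_m| = O(log^{(5)}(m))$. Finally I enumerate $s = 0, 1, \ldots, \lfloor C\,log^{(2)}(m)\rfloor$ and apply the test of Claim~\ref{claim p25} to each. For every such $s$, $|s| = O(log^{(3)}(m))$ dominates $|k|$, $|n|$, and $|r_m|$, so a single test costs $O((log^{(3)}(m))^t)$ steps, and the whole enumeration costs
\[
O(log^{(2)}(m)) \cdot O((log^{(3)}(m))^t) \;=\; O(log(m)) \;=\; O(|m|),
\]
using that $(log^{(3)}(m))^t = o(log^{(2)}(m))$ for fixed $t$. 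Lemma~\ref{lemma alpha} then gives correctness: the predicate holds iff some enumerated $s$ passes the test.

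Summing the contributions, the procedure runs in $O(|k|+|n|+|m|) = O(\max(|k|,|n|,|m|))$ steps, as required. The main obstacle I expect is guaranteeing that the preliminary filter genuinely shrinks $|k|$ and $|n|$ to quantities negligible compared to $|m|$ — without it, the polynomial-in-argument-size cost of $Comput_<$ from Claim~\ref{claim p25} could be dominated by $|k|^t$ or $|n|^t$ and would no longer fit within $O(|m|)$.
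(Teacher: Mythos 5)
Your proposal is correct and follows essentially the same strategy as the paper: verify $k\geq 4,\ n\geq 3$, pre-filter $k$ and $n$ against an iterated logarithm of $m$ (you use $log^{(4)}(m)$ where the paper uses the looser $log^{(2)}(m)$ — either works), compute $r_m=Inv_{A_3}(m)$, enumerate $s\leq C\,log^{(2)}(m)$ and test $Comput_<$, then invoke Lemma~\ref{lemma alpha} for correctness. The only cosmetic differences are the threshold in the filter and that the paper explicitly adds the redundant test $r_m>3$ rather than relying on a constant-size lookup for small $m$; the time analysis in both cases reduces to $O\bigl(log^{(2)}(m)\cdot(log^{(3)}(m))^t\bigr)=O(log(m))=O(|m|)$.
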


\begin{proof} \ Let \,$C\geq 1$\, be the constant mentioned in Lemma~\ref{lemma alpha}.\medskip

\textbf{The algorithm which decides the predicate \ }``$\pmb{ k\geq 4\,\land\,n\geq 3\,\land\,A_k(n)<m}$":
\begin{enumerate}[label=(\arabic*),topsep=1mm,itemsep=-.2ex]
\item We check \,$k\geq 4,\,n\geq 3$\, and then \,$k,n\leq log^{(2)}(m)$,
\item we compute \,$r_m=Inv_{A_3}(m)$\, and check \,$r_m>3$.
\item If these previous steps have been successfully completed, we try all \,$s\leq C\,log^{(2)}(m)$ \,to obtain \,$Comput_<(s,k,n,r_m)$.
If we fail to obtain such an $s$ or  to satisfy steps (1) and (2), then we output ``No". Otherwise it is ``yes".
\end{enumerate}

We note that \,$r_m\leq 3$\, implies \,$A_3(3)\geq m$\, and hence \,$A_k(n)\geq A_3(3)\geq m$. Hence as the requirement \,``$k,n\leq log^{(2)}(m)$", the condition \,$r_m>3$\, can be harmlessly added in the definition of the algorithm. Their role is only to reduce the running time of the algorithm.\nl
By Lemma~\ref{lemma alpha}, the algorithm is correct.\medskip

\textbf{Running time}:
\begin{enumerate}[label=(\arabic*),topsep=-1mm,itemsep=-.2ex]
\item By Lemma~\ref{lemma2 p5}, step (1) requires at most \,$\+O(\max(|k|,|n|,|m|))$\, steps
\item By Lemma~\ref{time invk}, step (2) needs at most \,$\+O(|m|)$\, steps.
\item If \,$r_m>3$, then \,$r_m-1\geq 3$\, and \,$A_3(r_m-1)<m$\, implies because of Claim~\ref{claim p4}  \,$r_m\leq log^{(4)}(m)$. \,We thus have \,$s,k,n,r_m\,\leq\,C\cdot log^{(2)}(m)$\, and hence \nl\centerline{$
|s|,|k|,|n|,|r_m|\,\leq\,2C\cdot log^{(3)}(m)$}\nl
(because if \,$d,log(v)\geq 1$, then \,$u\leq dv$\, implies \,$|u|\leq 2d\,log(v)$).\nl
By Claim~\ref{claim p25}, there are \,$B,t\in\n$\, such that, for each $s\in\n$, checking\\  $Comput_<(s,k,n,r_m)$\, takes at most \,$B(\max(|s|,|k|,|n|,|r_m|))^t$\, steps.\nl
Hence checking for all $s\leq C\,log^{(2)}(m)$, whether \,$Comput_<(s,k,n,r_m)$\, holds, requires at most \,$T\,=\,B2^tC^{t+1}log^{(2)}(m)(log^{(3)}(m))^t$ \,steps.\nl
There is $K\in\n$ (independent of $m$) such that  \,$T\,\leq\,K(log^{(2)}(m))^2$

Using  \,$(log(r))^2\leq 4r$, for \,$r\geq 4$, we deduce \,$T\,\leq\,4K\,log(m)\,\leq\,4K|m|$.
\end{enumerate}

Lemma~\ref{lemma beta} follows from the time estimates of (1),(2) and (3).
\end{proof}

It suffices now to remove the hypothesis \,``$k\geq 4\,\land\,n\geq 3$".

\begin{lemma}\label{lemma p27}
There is a constant $D\in\n$ such that for any $k,n,m\in\n$, the predicate ``$A_k(n)<m$" can be checked in at most \,$D\,\max(|k|,|n|,|m|)$\, steps.
\end{lemma}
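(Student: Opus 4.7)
The plan is to reduce to Lemma~\ref{lemma beta} by treating the small cases $k \leq 3$ and $n \leq 2$ separately; in both regimes we already have efficient tools at hand, and the union of the three regimes covers all $(k,n) \in \n^2$.

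First I would dispatch on the values of $k$ and $n$. If $k \geq 4$ and $n \geq 3$, Lemma~\ref{lemma beta} immediately gives what we need in time $D\,\max(|k|,|n|,|m|)$ for some constant $D$. This check itself (comparing $k$ with $4$ and $n$ with $3$) is performed in time $O(|k|+|n|)$, which is absorbed by the bound.

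Next I would handle the case $n \leq 2$. By Fact~\ref{fact p3}, $A_k(0) = 1$, $A_k(1) = 2$ and $A_k(2) = 4$ for every $k \in \n$, so the value $A_k(n)$ is one of three constants independent of $k$. Deciding $A_k(n) < m$ therefore reduces to a single comparison of $m$ with a fixed constant, which is done in time $O(|m|)$.

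Finally I would handle the case $k \leq 3$ (with $n$ arbitrary). By definition of the inverse, $A_k(n) < m$ holds iff $n < Inv_{A_k}(m)$: indeed, $Inv_{A_k}(m)$ is the least $k'$ with $A_k(k') \geq m$, and since $A_k$ is strictly increasing (Lemma~\ref{property_Ak}(a)), the equivalence is immediate. For $k \leq 3$, Lemma~\ref{time invk} shows that $Inv_{A_k}$ is computable in linear time, so we can compute $Inv_{A_k}(m)$ in time $O(|m|)$, and then compare the result with $n$ in time $O(\max(|n|,|Inv_{A_k}(m)|)) = O(\max(|n|,|m|))$.

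Assembling these three cases, the algorithm first spends $O(|k|+|n|)$ steps distinguishing which regime we are in and then runs in time $O(\max(|k|,|n|,|m|))$ in each regime. Taking a sufficiently large constant $D$ yields the desired bound. I do not expect any real obstacle here: once the cases $k \leq 3$ and $n \leq 2$ are separated, everything already established (Fact~\ref{fact p3}, Lemma~\ref{time invk}, Lemma~\ref{lemma beta}) applies directly, and the only small care required is to note that the case split is cheap and that the equivalence $A_k(n) < m \Leftrightarrow n < Inv_{A_k}(m)$ used in the third regime follows from strict monotonicity of $A_k$.
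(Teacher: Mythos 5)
Your proposal is correct and follows essentially the same route as the paper: the paper also splits into the three regimes $k\geq 4\wedge n\geq 3$ (Lemma~\ref{lemma beta}), $k\leq 3$ (via the equivalence $A_k(n)<m\Leftrightarrow Inv_{A_k}(m)>n$ and Lemma~\ref{time invk}), and $n\leq 2$ (via Fact~\ref{fact p3}). No substantive difference.
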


\begin{proof} \ Let $\begin{array}[t]{lll}
U(k,n,m)&\text{ iff } &k\geq 4\,\land\,n\geq 3\,\land\,A_k(n)<m,\nl
V(n,k,m)&\text{ iff } &k\leq 3\,\land\,A_k(n)<m,\nl
W(k,n,m)&\text{ iff } &n\leq 2\,\land\,A_k(n)<m.
\end{array}$

\medskip
Then \ \ $A_k(n)<m\ssi U(k,n,m)\,\lor\,V(k,n,m)\,\lor\,W(k,n,m)$.\nl
- By Lemma~\ref{lemma beta}, \,$U(k,n,m)$\, can be checked in \,$\+O(\max(|k|,|n|,|m|))$\, steps.\nl
- For any $k,n,m$, one has the equivalence: \ $V(k,n,m)\ \Leftrightarrow \ k\leq 3\,\land\,Inv_{A_k}(m)>n$\\
Hence by Lemma~\ref{time invk}, \,$V(k,n,m)$ \,can be checked in \,$\+O(\max(|k|,|n|,|m|))$\, steps.\nl
- By Fact~\ref{fact p3}, for any \,$k,n,m$, one has the equivalences:
\begin{align*}
W(k,n,m)&\Leftrightarrow\  (n=0\,\land\,A_k(0)<m)\,\lor \,(n=1\,\land\,A_k(1)<m)\,\lor \,(n=2\,\land\,A_k(2)<m)\\
&\Leftrightarrow\ (n=0\,\land\,m>1)\,\lor\,(n=1\,\land\,m>2)\,\lor\,(n=2\,\land\,m>4).
\end{align*}
Hence \,$W(k,n,m)$\, can also be verified in \,$\+O(\max(|k|,|n|,|m|))$\, steps.
\end{proof}

Let \,$Graph(A)=\{(k,n,m)\in\n^3\,:\,A_k(n)=m\}.$ We deduce:

\begin{proposition}\label{prop graph}
The predicate ``$(k,n,m)\in Graph(A)$" is checkable in linear time.
\end{proposition}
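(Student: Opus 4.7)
My plan is to reduce the equality predicate to the strict inequality predicate already handled by Lemma~\ref{lemma p27}. The key observation is the trivial equivalence
\[ A_k(n) = m \ \iff\ A_k(n) < m+1 \ \land\ \neg\bigl(A_k(n) < m\bigr), \]
which lets me decide $(k,n,m) \in Graph(A)$ by two invocations of the algorithm underlying Lemma~\ref{lemma p27}, together with the trivial preliminary step of computing $m+1$ from $m$.

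The plan is then to execute the following procedure on input $(k,n,m)$. First, compute $m'=m+1$; on a Turing machine this takes $O(|m|)$ steps and yields a representation of length at most $|m|+1$. Next, run the algorithm of Lemma~\ref{lemma p27} on $(k,n,m')$ to decide whether $A_k(n) < m+1$, and then again on $(k,n,m)$ to decide whether $A_k(n) < m$. Output ``yes'' iff the first returns ``yes'' and the second returns ``no''.

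For correctness, the equivalence above is immediate since $A_k(n)$ and $m$ are natural numbers: $A_k(n) = m$ holds precisely when $A_k(n) \leq m$ and $A_k(n) \geq m$, i.e.\ $A_k(n) < m+1$ and $\neg(A_k(n) < m)$. For the running time, Lemma~\ref{lemma p27} bounds each of the two calls by $D\max(|k|,|n|,|m'|)$ and $D\max(|k|,|n|,|m|)$ respectively; since $|m'|\leq |m|+1$, the total cost is at most
\[ O(|m|) + 2D\max(|k|,|n|,|m|) + O(1) \ =\ O\bigl(\max(|k|,|n|,|m|)\bigr), \]
which is linear in the binary input size $|k|+|n|+|m|$ of the triple $(k,n,m)$.

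There is essentially no obstacle here: the whole content has been packed into Lemma~\ref{lemma p27}, and the proposition follows by a one-line reduction. The only thing to be careful about is making the input-size convention explicit, namely that the relevant notion of ``linear time'' refers to the length of the binary encoding of $(k,n,m)$, so that $\max(|k|,|n|,|m|)$ indeed qualifies as linear.
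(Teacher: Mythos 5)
Your proposal is correct and follows exactly the same route as the paper, which proves Proposition~\ref{prop graph} via the identical equivalence $A_k(n)=m \Leftrightarrow A_k(n)<m+1 \land \neg(A_k(n)<m)$ and two applications of Lemma~\ref{lemma p27}. Your write-up merely makes explicit the routine bookkeeping (computing $m+1$ and summing the two time bounds) that the paper leaves implicit.
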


\begin{proof} \ For any \,$k,n,m\in\n$, \ $A_k(n)=m\ssi A_k(n)<m+1\,\land \,\neg(A_k(n)<m)$.
\end{proof}

Let us recall that the function \,$Ack$\, is such that, for any \,$k\in\n$, \,$Ack(k)=A(k,k)$\, and \,$\alpha$\, is its inverse \,$Inv_{Ack}$ (definitions~\ref{def-A}(c) and~\ref{def-ack}).\nl
To obtain the fact that \,$Graph(A)$\, is checkable in linear time, we could have as in Tourlakis' book, considered the predicate \,``$A_k(n)=m$"\, in place of \,``$A_k(n)<m$". But to prove that $\alpha$ itself, can be computed in linear time, it seemed to us that the use of the predicate \,``$A_k(n)<m$" was necessary. It is not the case for some approximations; for instance \,$\alpha'\!:n\mapsto\alpha(log^{(2)}(n))$ satisfies for any \,$n\in\n,\linebreak\,\ 0\leq \alpha(n)-\alpha'(n)\leq 2$ and the fact that it is computable in linear time can be deduced from the fact that  \,$Graph(A)$\, is checkable in linear time.


\begin{proposition}\label{alpha linear}
The function \,$\alpha$\, is computable in linear time.
\end{proposition}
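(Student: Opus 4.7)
The plan is to bootstrap from the approximation $\alpha'(n)=\alpha(log^{(2)}(n))$ mentioned in the remark following Proposition~\ref{prop graph}, which satisfies $0\le\alpha(n)-\alpha'(n)\le 2$. Once $\alpha'(n)$ is in hand, $\alpha(n)$ lies in the three-element set $\{\alpha'(n),\alpha'(n)+1,\alpha'(n)+2\}$ and can be pinned down with $O(1)$ applications of Lemma~\ref{lemma p27}. Note that a naive iteration $k=0,1,\dots,\alpha(n)$ of Lemma~\ref{lemma p27} costs $\Theta(\alpha(n)\cdot|n|)$, which is not $O(|n|)$ since $\alpha$ is unbounded; the indirection through $\alpha'$ is what allows us to absorb the $\alpha$-factor into a sublinear term.

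On input $n$ (finitely many small cases handled by lookup), I would proceed in three stages. First, compute $x=log^{(2)}(n)$ in $O(|n|)$ steps by two successive invocations of Lemma~\ref{lemma2 p5}. Second, compute $k'=\alpha(x)$ by a direct search on this much smaller argument: for $k=0,1,2,\ldots$, test the predicate ``$A_k(k)<x$'' using Lemma~\ref{lemma p27}, and stop at the first $k$ where the test fails. Since $A(x,x)\ge x$, the search halts within $x+1$ iterations, and each iteration costs $O(\max(|k|,|x|))=O(|x|)$; the total cost of this stage is therefore $O(x\cdot|x|)=O(log^{(2)}(n)\cdot log^{(3)}(n))$, which is $o(|n|)$.

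Third, refine $k'$ to $\alpha(n)$ using the bound $\alpha(n)-k'\in\{0,1,2\}$: for $j=0,1,2$ in order, apply Lemma~\ref{lemma p27} to test ``$A_{k'+j}(k'+j)<n$'', and return the least $k'+j$ for which the test is false. Each of the at most three tests costs $O(\max(|k'+j|,|n|))=O(|n|)$, yielding $O(|n|)$ for this stage. Summing the three stages gives an overall cost of $O(|n|)$, as required.

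The main obstacle is justifying the approximation bound $\alpha(n)-\alpha'(n)\le 2$ itself. The lower inequality $\alpha'(n)\le\alpha(n)$ is immediate from monotonicity of $\alpha$ and $log^{(2)}(n)\le n$, but the upper bound requires a short calculation on the Ackermann recurrence: writing $k=\alpha'(n)$ so that $A(k,k)\ge log^{(2)}(n)$, one argues, for $n$ past a fixed threshold, that
\[A(k+2,k+2)\;\ge\;A_{k+1}(A(k,k))\;\ge\;\exp^{(A(k,k))}(1)\;\ge\;\exp^{(2)}(log^{(2)}(n))\;=\;n,\]
where the second step uses Lemma~\ref{property_Ak} combined with the identity $A_1(v)=\exp^{(v)}(1)$ from Fact~\ref{fact p3}(3), and the third uses the elementary inequality $\exp^{(v)}(1)\ge \exp^{(2)}(v)$ valid for $v\ge 4$. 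This short inequality is the only ingredient the plan needs beyond Lemmas~\ref{lemma2 p5} and~\ref{lemma p27}.
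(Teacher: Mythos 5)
Your proposal is correct, and it organizes the final algorithm differently from the paper. Both arguments ultimately rest on the same engine, namely the linear-time decidability of the predicate ``$A_k(n)<m$'' (Lemma~\ref{lemma p27}, built on the $Comput_<$ encoding), but they deploy it differently. The paper's algorithm searches $j$ directly from $4$ up to $log^{(4)}(m)$ against the full input $m$; since invoking Lemma~\ref{lemma p27} as a black box that many times would cost $\Theta(log^{(4)}(m)\cdot|m|)$, the paper must re-enter the proof of that lemma, compute $\rho_3=Inv_{A_3}(m)$ once, and exploit the fact that each individual test $Comput_<(s,j,j,\rho_3)$ involves only arguments of size $O(log^{(3)}(m))$, so the whole double loop over $(j,s)$ stays within $O(|m|)$. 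You instead keep Lemma~\ref{lemma p27} as a black box: the full-size tests against $n$ are reduced to at most three by first locating $\alpha(n)$ to within an additive $2$ via the shrunken argument $x=log^{(2)}(n)$, on which even a naive linear search costs only $O(log^{(2)}(n)\cdot log^{(3)}(n))=o(|n|)$. The price of this modularity is the approximation inequality $\alpha(n)\leq\alpha(log^{(2)}(n))+2$, which the paper asserts in a remark without proof but which your sketch does establish: the chain
\[A(k+2,k+2)\;\geq\;A_{k+1}(A_{k+2}(k+1))\;\geq\;A_{k+1}(A(k,k))\;\geq\;A_1(A(k,k))\;=\;exp^{(A(k,k))}(1)\;\geq\;exp^{(2)}(log^{(2)}(n))\;\geq\;n\]
is justified by Lemma~\ref{property_Ak} and Fact~\ref{fact p3}(3), with the last two steps valid once $log^{(2)}(n)\geq 4$ (your final ``$=n$'' should read ``$\geq n$'', and the finitely many $n$ below the threshold are indeed trivial since $\alpha(n)\leq 3$ there). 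Your refinement step is also sound: for $k'=\alpha(x)\leq\alpha(n)\leq k'+2$, the least $j\in\{0,1,2\}$ with $A(k'+j,k'+j)\geq n$ returns exactly $\alpha(n)$, and correctly uses Lemma~\ref{lemma p27} (valid for all $k,n,m$) rather than Lemma~\ref{lemma beta} (which requires $k\geq 4$). In short, the paper's route avoids the approximation lemma at the cost of opening up the complexity analysis of $Comput_<$ a second time, while yours is cleaner at the top level but needs one extra growth estimate on $A$.
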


\begin{proof} \ Let \,$C\geq 1$\, be the constant of Lemma~\ref{lemma alpha}. We now propose an algorithm which on input $m\in\n$, outputs \,$\alpha(m)$.\medskip

\textbf{The algorithm}: \ let \,$m\in\n$.
\begin{enumerate}[label=(\arabic*),topsep=-1mm,itemsep=-.2ex]
\item For each \,$k\leq 3$, we compute \,$\rho_k=Inv_{A_k}(m)$. If there is \,$k\leq 3$\, such that \,$\rho_k\leq k$, then we output the least such $k$. Otherwise we go to step 2.
\item We compute \,$log^{(4)}(m)$ (we shall see that necessarily it is greater or equal to 4). For each \,$j$\, such that \,$4\leq j\leq log^{(4)}(m)$, we test all \,$s\leq C\,log^{(2)}(m)$ \,to obtain \,$Comput_<(s,j,j,\rho_3)$.\nl
We output the least \,$j_0\geq 4$\, for which we fail to find such an $s\leq C\,log^{(2)}(m)$.
\end{enumerate}

\medskip
\textbf{Validity of the algorithm}:
\begin{enumerate}[label=(\arabic*),topsep=-1mm,itemsep=-.2ex]
\item If we stopped after step (1) and $k_0\leq 3$ is least such that \,$\rho_{k_0}\leq k_0$, then\\ $k_0\geq Inv_{A_{k_0}}(m)$ \,implies \,
\begin{equation}\label{asterisk}
A_{k_0}(k_0)\geq m.
\end{equation}
\begin{itemize}[topsep=-1mm,itemsep=-.2ex]
\item If \,$k_0=0$, then by \eqref{asterisk} \,$\alpha(m)=0$,
\item otherwise, by definition, \,$\rho_{k_0-1}>k_0-1$. Hence \,$\rho_{k_0-1}-1\geq k_0-1$ \,and we deduce\vspace*{-2mm}
\begin{equation}\label{2asterisk}
A_{k_0-1}(k_0-1)\,\leq\,A_{k_0-1}(\rho_{k_0-1}-1)\,<\,m. \vspace*{-3mm}
\end{equation}
\eqref{asterisk}+\eqref{2asterisk} give \,$\alpha(m)=k_0$.
\end{itemize}
\item We thus assume now that for any $k\leq 3,\ \rho_k>k$. \,Hence \,$\rho_3>3$\, and \,$A_3(3)<m$. \,By Fact~\ref{case3}\vspace*{-2mm}
\begin{equation}\label{un}
log^{(4)}(m)>3.\vspace*{-3mm}
\end{equation}
 Hence the following inequalities hold:
\begin{align*}
A(log^{(4)}(m),log^{(4)}(m))\ & \geq\  A_3(log^{(4)}(m))\ \espa\text{(by~\eqref{un})}\\
                            & \geq\  exp^{(4)}(log^{(4)}(m))\esp\text{(by~\eqref{un} and Claim~\ref{claim p4})}\\
                            & \geq\  m.
\end{align*}
Hence \ $4\leq\alpha(m)\leq log^{(4)}(m)$. \,Let \,$j_0=\alpha(m)$. \,Then for any \,$i<j_0$, one has \,$A(i,i)<m$. \,By Lemma~\ref{lemma alpha}, we must succeed in finding \,$s\leq C\,log^{(2)}(m)$\, such that \,$Comput_<(s,i,i,\rho_3)$ \,and we must fail in finding one such \,$s$\, satisfying  \,$Comput_<(s,j_0,j_0,\rho_3)$\, since \,$A(j_0,j_0)\geq  m$.\nl
\end{enumerate}
Hence the algorithm outputs \,$\alpha(m)$.\medskip

\textbf{Running time of the algorithm}:
\begin{enumerate}[label=(\arabic*),topsep=-1mm,itemsep=-.2ex]
\item By Lemma~\ref{time invk}, step (1) takes \,$\+O(|m|)$\, steps.
\item We know \,$\,\rho_3>3$. By Claim~\ref{claim p4}, \,$\rho_3-1\geq 3$\, and \,$A_3(\rho_3-1)<m$ \,imply\, $\rho_3-1<\,log^{(4)}(m)$\, and\, $\rho_3\leq\,log^{(4)}(m)$.\nl
For each \,$4\leq j\leq log^{(4)}(m)$\, and for all \,$s\leq C\,log^{(2)}(m)$, \,we check\\
$Comput_<(s,j,j,\rho_3)$.\\
Since \,$s,j,\rho_3\,\leq\,C\,log^{(2)}(m)$, \,as in the proof of Lemma~\ref{lemma beta}, we obtain\nl \centerline{\,$|s|,|j|,|\rho_3|\,\leq\,2C\,log^{(3)}(m)$.}\nl
By Lemma~\ref{claim p25}, there are \,$B,t\in\n$\, such that \,$Comput_<(s,j,j,\rho_3)$\, can be checked in at most \,$B(\max(|s|,|j|,|\rho_3|))^t$\, steps.\nl
We deduce that step (3) can be completed in at most \nl
\centerline{$T\,=\,B2^tC^{t+1}log^{(2)}(m)\,log^{(4)}(m)\,(log^{(3)}(m))^t$\, steps.}\nl
There is $\,K\in\n$\, (independent of $m$) such that \,$T\leq K\,log(m)\leq K\,|m|$.
\end{enumerate}
Hence steps (1) and (2) take time \,$\+O(|m|)$.\end{proof}\medskip

These methods can be applied to the different two argument  inverse Ackermann functions proposed  in~\cite{tar,cha,seidel,seidel2}.

\end{document}